\documentclass{paper}
\usepackage{breakcites}

%
%

\newcommand{\all}{\ensuremath{N}} 
\newcommand{\M}{\ensuremath{M}\xspace} 
\newcommand{\Mset}{\ensuremath{\mathcal{M}}\xspace} 
\newcommand{\Nset}{\ensuremath{\mathcal{\all}}\xspace} 
\newcommand{\set}[1]{\ensuremath{\{#1\}}\xspace}
\newcommand{\init}[2][]{
\ifthenelse{\isempty{#1}}{\ensuremath{\boldsymbol{\pi}^{#2}(\M)}}{\ensuremath{\pi^{#2}({#1},{\M})}}} 

\newcommand{\p}[3][]{\ensuremath{p^{#1}(#2,#3)}}

\newcommand{\bp}[1][]{\ensuremath{\boldsymbol{p}(#1)}}
\newcommand{\brho}[2][]{%
\ifthenelse{\equal{#1}{b}}{{\ensuremath{\boldsymbol{\rho}(\abs,\init[]{},\mat[]{\M})}}{\ensuremath{\boldsymbol{\rho}(\abs,\boldsymbol{\pi}^{#2}(\M),\mat[#2]{\M})}}%
}{\ifthenelse{\isempty{#2}}{\ensuremath{\boldsymbol{\rho}(\init[]{},\mat[]{\M)}}}{\ensuremath{\boldsymbol{\rho}(\init[]{},\mat[#2]{\M})}}%
}}
\newcommand{\ro}[4][]{
\ifthenelse{\isempty{#1}}{
\ensuremath{\rho({#3},\mat[#2]{#4})}}{\ensuremath{\rho({#3},\mat[#2]{\{#4\}})}}
} 
\newcommand{\brhod}[2][]{%
\ifthenelse{\equal{#1}{d}}{\ensuremath{\boldsymbol{\rho}(\abs,\delta_{#2},\mat[]{\M})}}{\ensuremath{\boldsymbol{\rho}(\abs,\boldsymbol{\pi}^{#2}(\M),\mat[]{\M})}}}

\newcommand{\lra}[3][]{
\ifthenelse{\isempty{#1}}{
\ensuremath{\boldsymbol{\rho}(#2,#3)}}{\ifthenelse{\isempty{#2}}{\ensuremath{\rho(#1,#3)}}{\ensuremath{\rho(#1,#2,#3)}}}
} 
\newcommand{\bra}[3][]{
\ifthenelse{\isempty{#1}}{
\ensuremath{\boldsymbol{\rho}(\abs,{#2},{#3})}}{\ensuremath{\rho(#1,\abs,{#2},{#3})}}
} 

\newcommand{\mat}[2][]{\ifthenelse{\isempty{#1}}{\ensuremath{Q({#2})}}{\ensuremath{Q^{#1}({#2})}}} 
\newcommand{\tmat}[1]{\ensuremath{\tilde{Q}({#1})}} 
\newcommand{\q}[3][]{
\ifthenelse{\isempty{#1}}{\ensuremath{q_{#2}(\M)}}{\ifthenelse{\equal{#1}{c}}{\ensuremath{q_{#2}(\{#3}\})}{\ensuremath{q_{#2}(#3)}}
}
} 
\newcommand{\qp}[3][]{
\ifthenelse{\isempty{#1}}{\ensuremath{q'_{#2}(\M)}}{\ifthenelse{\equal{#1}{c}}{\ensuremath{q'_{#2}(\{#3}\})}{\ensuremath{q'_{#2}(#3)}}
}
} 
\newcommand{\abs}{\ensuremath{\alpha}} 
\newcommand{\period}{\ensuremath{\tau}} 
\newcommand{\sa}[3][]{
\ifthenelse{\isempty{#1}}{\ensuremath{s_{#2}(\M)}}{\ifthenelse{\equal{#1}{c}}{\ensuremath{s_{#2}(\{#3}\})}{\ensuremath{s_{#2}(#3)}}
}}
\newcommand{\sav}[1][]{
\ifthenelse{\isempty{#1}}{\ensuremath{\boldsymbol{\gamma}(\M)}}{\ensuremath{\boldsymbol{\gamma}_{#1}(\M)}}
}
\newcommand{\dif}[3][]{
\ifthenelse{\isempty{#1}}{\ensuremath{d_{#2}(\M)}}{\ifthenelse{\equal{#1}{c}}{\ensuremath{d_{#2}(\{#3}\})}{\ensuremath{d_{#2}(#3)}}
}
} 
\newcommand{\cya}[1][]{{\ensuremath{C(#1)}}}

\newcommand{\cyaset}[1][]{\ensuremath{\mathcal{C}(#1)}}

\newcommand{\cyrset}[2][]{\ensuremath{\mathcal{C}_{#1}^{S}(\bp[#2])}}
\newcommand{\del}[2][]{\ensuremath{\delta_{#1}(\bp[#2])}}
\newcommand{\val}[1]{\ensuremath{u({#1})}}
\newcommand{\msc}[1][]{\ifthenelse{\equal{#1}{b}}{baseline MSC model}{\ifthenelse{\equal{#1}{l}}{limiting MSC model}{\ifthenelse{\equal{#1}{B}}{Baseline MSC model}{\ifthenelse{\equal{#1}{L}}{Limiting MSC model}{MSC model}}}}}
\newcommand{\G}{\ensuremath{G^{\M}}} 
\newcommand{\g}[1][]{
\ifthenelse{\isempty{#1}}{\ensuremath{g_{\M}}}{\ensuremath{g_{\M}((#1))}}}

\newcommand{\cycle}[3][]{
\ifthenelse{\isempty{#1}}{#3}{#2}} 
\newcommand{\btc}{bounded in a cycle\xspace}
\newcommand{\ai}{\ensuremath{i}\xspace}
\newcommand{\comp}[1][]{%
\ifthenelse{\equal{#1}{C}}{Comparability restriction\@\xspace}{comparability restriction\@\xspace}}
\makeatother
\usepackage{mathrsfs}
\pgfplotsset{compat=1.18} 
\begin{document}

\begin{titlepage}
      \title{ \huge{Markov Stochastic Choice\thanks{I would like to thank my advisors Nick Netzer and Jakub Steiner for their invaluable guidance, support and constructive comments. I am grateful to Carlos Alós-Ferrer, Adam Dominiak, Sean Horan, Matthew Kovach, Marco LiCalzi, Fabio Maccheroni, Massimo Marinacci, Yusufcan Masatlioglu, Ariel Rubinstein, Gerelt Tserenjigmid, Kemal Yildiz, as well as seminar and conference participants at Bern, Bocconi, Virginia Tech, BRIC (2022), NASM (2021) and RUD (2020) for insightful discussions and feedback. I thank MIT and Drew Fudenberg for the hospitality during the academic year 2019/2020 and to the Swiss National Science Foundation (grant No.\ P1ZHP1-184169) for the generous financial support. This research has been mainly conducted while I was affiliated with the Department of Economics at the University of Zurich.
      }}}
      \author{Kremena Valkanova\thanks{E-mail: kremena.valkanova@gmail.com}}
      \date{September 16, 2024}
      \maketitle
\begin{abstract}
We examine the effect of item arrangement on choices using a novel decision-making model based on the Markovian exploration of choice sets. This model is inspired by experimental evidence suggesting that the decision-making process involves sequential search through rapid stochastic pairwise comparisons. Our findings show that decision-makers following a reversible process are unaffected by item rearrangements, and further demonstrate that this property can be inferred from their choice behavior. Additionally, we provide a characterization of the class of Markovian models in which the agent makes all possible pairwise comparisons with positive probability. The intersection of reversible models and those allowing all pairwise comparisons is observationally equivalent to the well-known Luce model. Finally, we characterize the class of Markovian models for which the initial fixation does not impact the final choice and show that choice data reveals the existence and composition of consideration sets.

\end{abstract}
\textit{Keywords}: stochastic choice, search, bounded rationality, attention.
\end{titlepage}

\section{Introduction}
We frequently encounter decision-making problems that involve locating, evaluating, and selecting the best option from a set of alternatives. Such decisions arise in various contexts, including choosing a product from a supermarket shelf or a vending machine, selecting a movie on a webpage, or deciding which article to read in a newspaper. Prior research demonstrates that the arrangement and adjacency of alternatives significantly impacts the decision-making process and  final choices \citep{Dreze1994, Chen2006, Keel2015,Huang2021, Mattis2024}. This mechanism of influencing decisions coexists with well-established attention-seeking techniques aimed at increasing the selection probability of target items\footnote{See \citet{Orquin2013} for a review.}, as well as with manipulations of the choice set composition, exploiting menu-dependent preferences.\footnote{\citet{Adler2024} provide a recent review over this vast literature in the marketing context.}

Predicting the effects of item arrangement on choices is crucial for choice architects, even though their objectives may vary. For instance, academics and consumer researchers, focused on eliciting unbiased preferences, would seek to minimize external influences that could distort decision-making. Alternatively, benevolent social planners may design choice environments to nudge individuals towards healthier options or to reduce cognitive load by simplifying the choice problem without influencing the decision-making process. Finally, sellers and platform designers might exploit product positioning to increase the selection frequency of certain items, thereby maximizing profits.

Without a thorough understanding of the underlying decision process, choice problem designers might disrupt consumer choices by nudging decision-makers in unforeseen or unintended directions. Moreover, identifying which types of decision-makers are more susceptible to these positioning effects is crucial for developing more effective and personalized interventions. Finding that certain decision-makers are unaffected by these factors justifies the use of varied presentation formats to simplify and guide the choice process -- for example, through list-based presentations or the integration of recommender systems on online platforms.

To address these challenges, we introduce and analyze a choice model we call Markov Stochastic Choice (MSC). This model allows us to examine the effects of item positioning on decision-making and captures patterns consistently observed across multiple disciplines, such as economics, marketing, psychology, and visual neuroscience. These patterns include rapid pairwise comparisons between alternatives, a tendency to compare nearby options, the final fixation often aligning with the selected item, and the inherently stochastic nature of decision-making.\footnote{Notable references include \citet{Russo1975}, \citet{Henderson1999}, \citet{Rayner1998}, \citet{Chandon2009}, \citet{Reutskaja2011}, \citet{Wedel2007}, \citet{Pieters2008}, \citet{Noguchi2014}, \citet{Shimojo2003}, and \citet{Armel2008}.}

The MSC decision-making process begins with an agent randomly selecting an alternative to view according to their initial beliefs and knowledge. In each period, the agent evaluates the current alternative against a competitor, transitioning to the competitor according to a probability distribution specific to the current alternative and independent of past transitions. This approach minimizes cognitive load by not relying on complete recollection of the past exploration. Transition probabilities are assumed to be consistent across different choice sets, meaning the relative likelihood of transitioning between any two alternatives remains unaffected by other items in the menu. The decision-making process may terminate in each period with a given stopping probability, after which the agent selects the most recently considered alternative. The described procedure corresponds to a discrete-time Markov chain, where each state represents an alternative in the choice set and the transitions are captured by eye movements, or saccades. With diminishing stopping probability, \ie when there is no time pressure, we show that the choice function of the decision-maker converges towards the limiting distribution of the Markov chain.

Various factors can influence stochastic pairwise transitions. Eye fixations towards an item might be solely driven by its salience and attention-grabbing potential regardless of its value, as suggested by \citet{Reutskaja2011} and \citet{Milosavljevic2012}. Alternatively, probabilistic transitions between pairs of alternatives might reflect agent's noisy preferences. The literature on stochastic choice offers several explanations for this noise, including random shocks to preferences, evidence accumulation, bounded rationality, and random mistakes (see \citet{Agranov2017}). 
Our model offers a simple framework to employ widely used models of binary stochastic choice such as the Drift-Diffusion Model \citep{Ratcliff1978} to study multi-alternative choice. 

We first consider nudging interventions involving restrictions of certain pairwise comparisons, achieved by either placing two items too far apart or not presenting them as related items in a recommender system, while still keeping the set of reachable items the same. We find that the types of decision makers who are not influenced by such presentations of the choice problem are characterized by following a reversible Markov process, \ie the average number of saccades from one alternative to another should be the same as in the reverse direction in the long run. Since the choices are robust to any comparability restriction, we could infer the reversibility property if we find that no rearrangement of the choice set leads to a change in the stochastic choices. What is more, the choice function itself reveals the reversibility property of the rationalizing model. This is the case when there is no cycle of alternatives for which the relative choice probabilities between each of the subsequent pairs is higher in the current choice set than in the corresponding binary sets.

Next, we identify decision makers with greater cognitive capabilities who can make all pairwise comparisons directly regardless of arrangement, \ie their choices are rationalizable with an \msc[] we call fully comparable. The corresponding choice function is such that any pair, for which the relative choice probability in the current choice set differs from that in binary sets, belongs to a cycle where all differences in relative choice probabilities between the current and binary menus have the same sign. We term the choice function between such pairs as ``bounded in a cycle''. A violation of this property would imply that the decision-maker was not able to compare the pair directly.

Interestingly, when considering the intersection of reversible and fully comparable MSC models, we find that they are characterized by two well-known axioms: positivity and Independence of Irrelevant Alternatives (IIA). Thus, this special class of MSC models is observationally equivalent to the well-known Luce model \citep{Luce1959}. Additionally, we demonstrate that the relative transition probabilities between each pair are determined solely by the relative utility of the items and are not influenced by other factors such as salience.

Finally, we analyze the potential to influence decision-making by altering the initial fixation probability. In practical applications, increasing the likelihood of starting the decision-making process at a specific target alternative can be achieved through various techniques, such as central placement, eye-catching packaging, default or sponsored item displays, and effective advertising \citep{Chandon2009,Orquin2013}. Our model predicts that under time pressure, the initial fixation can significantly affect final choices. When agents have unlimited time, such nudging techniques are only effective if distinct consideration sets exist, \ie the Markov process is reducible. We demonstrate that for a stochastic choice function to be rationalizable with an irreducible model, a cycle must exist involving the entire menu where all subsequent pairs are bounded in a cycle. Additionally, the stochastic choice function reveals the precise communicating classes, making these interventions predictable and effective.

The remainder of the paper is structured as follows: Section~\ref{sec:model} defines the choice model. Section~\ref{sec:comp-restr} analyzes the impact of comparability restrictions and identifies reversible and pairwise comparable models. Section~\ref{sec:fixation} examines the effect of initial fixation probability on final choices and characterizes the irreducible MSC model. Section~\ref{sec:literature} reviews the relevant literature. Section~\ref{sec:conclusion} concludes.

\section{Model}
\label{sec:model}
Let $\all$ be a finite set of all alternatives. A menu (or choice set) \M is a non-empty subset of $\all$. The set of all menus is denoted by \Nset and the set of all non-empty subsets of a menu \M is denoted by \Mset. 
\subsection{Stochastic choice functions}
A stochastic choice function is a mapping $p:\all\times \Nset \rightarrow [0,1]$ such that $\sum_{i\in \M}\p{i}{\M}=1$ and $i\not\in \M$ implies $\p{i}{\M}=0$. The function $\p{i}{\M}$ is interpreted as the probability of choosing alternative \ai from a menu \M. We denote a row vector of choice probabilities from a menu \M by \bp[\M]. The stochastic choice function on all menus is denoted by $\bp[\Nset]$. A stochastic choice function is positive if $\p{i}{\M}>0$ for all $i\in\M$ and all $\M\in\Nset$.

Let $\del[ij]{\M}=\p{i}{\M}\p[]{j}{\set{i,j}} -\p[]{i}{\set{i,j}}\p{j}{\M}$ be a weighted difference in choice probabilities from menu~$\M$ between two alternatives~$i$ and~$j$. A cycle of alternatives $\cya[\M]$ on $\M=\set{i_1,i_2,\dots, i_{|\M|}}$ is a set of ordered pairs such that $\cya[\M] = \set{(i_1,i_2),(i_2,i_3),\dots,(i_{|\M|},i_1)}$. The set of all cycles on~$\M$ is denoted by $\cyaset[\M]$. 

\begin{definition}
A cycle of alternatives $\cya[\M']$ is called sign-consistent on $\M'\in\Mset$ w.r.t.\ \bp[\M] if $\text{sgn}(\del[ij]{\M})=\text{sgn}(\del[kl]{\M})$ for all $(i,j),(k,l)\in\cya[\M']$.
\end{definition}
The set of all sign-consistent cycles on all subsets of $\M$ w.r.t.\ \bp[\M] is given by $\cyrset[]{\M}$. To indicate the sign of $\del[ij]{\M}$, we use the notation $\cyrset[0]{\M},\cyrset[+]{\M}$, $\cyrset[-]{\M}$, respectively.
\subsection{General choice procedure}
A decision maker evaluates the alternatives in a given menu~$M$ as follows. The agent has an initial fixation probability distribution $\init[]{}$ over the alternatives in the menu, determining the probability of starting the exploration process at each alternative. At time period $\period=0$, the agent draws an alternative~$i \in \M$, which becomes the current best alternative.

In the next period, alternative~\ai is compared to another available alternative~$j$. If~$j$ wins the comparison, the decision maker transitions from~$i$ to~$j$, making~$j$ the best alternative at period $\period=1$. These transitions are stochastic, with the probability of transitioning from~$i$ to~$j$ denoted by~$\q{ij}{}$. A transition probability matrix \mat[]{\M} contains all pairwise transition probabilities. 

The transition probabilities may be influenced by decision makers' preferences and cognitive constraints, as well as the presentation of the choice problem. These factors, in turn, are affected by items' salience, positioning, and default option status. Consequently, we assume that the transition probabilities are time-independent, \ie revisiting alternatives does not affect the probability to make subsequent transitions. However, we allow transition probabilities between each pair of alternatives to be menu-dependent, \ie $\q{ij}{} \neq \q[r]{ij}{\M'}$.
We impose the following three assumptions on the matrices \mat[]{\M} for all $\M\in\Nset$.
\begin{assumption}[Prolonged consideration]
For all $\M\in\Nset$ and $i,j\in \M$ holds
$\q[]{ii}{\M}=1-\sum_{j\neq i}\q{ij}{}>0$.
\label{a-consideration}
\end{assumption}
The assumption of prolonged consideration posits that there is a positive probability that no transition occurs in a given period, regardless of the current best alternative. Furthermore, it asserts that $\mat{\M}$ is a right stochastic matrix.

\begin{assumption}[Pairwise comparability on binary sets]
For all $i,j\in\all$, $\q[c]{ij}{i,j}=0$ implies that $\q[c]{ji}{i,j}>0$.
\label{a-pc}
\end{assumption}
Pairwise comparability on binary sets means that when choosing between two alternatives, the decision maker can transition in at least one direction, \ie the items are comparable. It implies that cognitive capacity should not limit transitions when evaluating only a pair of alternatives.

\begin{assumption}[Transition ratio independence of irrelevant alternatives (TR-IIA)]
For all $\M\in\Nset$ and $i,j\in \M$ holds
$\q[c]{ij}{i,j}\q{ji}{}=\q[c]{ji}{i,j}\q{ij}{}$.
\label{a-TR-IIA}
\end{assumption}
The TR-IIA assumption sets a minimal consistency requirement in decision-making across choice sets: the ratio of transition probabilities remains unaffected when the menu expands. This assumption arises when transition probability depends on the utility difference between two items and their distance -- whether physical (positioning on a shelf or screen) or subjective (similarity or ease of comparison).\footnote{See for example \citet{Russo1975}, \citet{Reutskaja2011} and \citet{Cerreia-Vioglio2018}.} Since these components remain unchanged by adding more items to the choice set, the transition ratio between the pair remains constant.

The decision-making procedure is repeated analogously in consecutive periods. The process is terminated with probability~$\abs \in (0,1)$ in each period. When the process stops, the agent chooses the last considered alternative. 

The choice procedure is equivalent to a finite-state discrete-time Markov chain with each state representing one alternative in menu~$\M$, initial distribution~$\init[]{}$, transition probability matrix~$\mat[]{\M}$, and stopping probability~$\abs$. The stochastic choice function generated by the model for each menu takes the following form:
\begin{equation}
\bra[]{\init[]{}}{\mat{\M}} = \sum_{\period=0}^\infty \abs\init[]{}(1-\abs)^{\period} \mat[]{\M}^{\period},
\label{msc-b}
\end{equation}
where each term in the sum represents the probability that the process stops at a given period multiplied by the probability of each alternative to be the current best at that time period. 

\subsection{Choice procedure without time constraints}
Consider the case in which the probability that the decision process is terminated approaches zero, hence the pressure to make a decision diminishes. If the underlying Markov chain is irreducible and aperiodic\footnote{Note that the assumption of prolonged consideration made earlier ensures that the Markov chain of any \msc[] is aperiodic. Recall that a communicating class of a Markov chain over~$\M$ is defined as a set of states $\M'\in\Mset$, which communicate, \ie for each pair of states $i,j\in \M'$ there is a sequence of states $k_1,k_2,\dots k_h\in\M'$ such that $\q{ik_1}{}\q{k_1k_2}{}\dots\q{k_hj}{}>0$. A Markov chain is irreducible when it has a single communicating class, hence the process can reach any state irrespectively of the initial state. Since the Markov chain of a \msc[] is finite, irreducibility implies positive recurrence.}, hence ergodic, a limiting distribution exists and it is the unique stationary distribution. 
Naturally, if the stopping probability approaches zero, the generated stochastic choice function converges to the stationary distribution of the Markov chain and does not depend on the initial distribution.
\vspace{1pt}
\begin{proposition}
\label{prop:limit}
A stochastic choice function $\bra[]{\init[]{}}{\mat{\M}}$ generated by an ergodic Markov chain converges to its stationary distribution as $\alpha \rightarrow 0$.
%
\end{proposition}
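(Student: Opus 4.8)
The plan is to read the choice vector $\bra[]{\init[]{}}{\mat{\M}}=\sum_{\period=0}^{\infty}\abs\,\init[]{}\,(1-\abs)^{\period}\,\mat[]{\M}^{\period}$ as a weighted average of the time-$\period$ distributions $\init[]{}\,\mat[]{\M}^{\period}$ and to show that as $\abs\to0$ the weights escape to $\period=\infty$, so the average is forced to the long-run law of the chain.

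I would first record two routine facts. (i) The coefficients $w_\period(\abs)=\abs(1-\abs)^{\period}$ are nonnegative and sum to $\abs/\bigl(1-(1-\abs)\bigr)=1$, so $\bra[]{\init[]{}}{\mat{\M}}$ is a genuine convex combination of probability vectors (which also re-confirms it is a stochastic choice function); moreover, for every fixed $T$ one has $\sum_{\period=0}^{T}w_\period(\abs)=1-(1-\abs)^{T+1}\to0$ as $\abs\to0$, so the mass concentrates on arbitrarily large horizons. (ii) Because the chain is ergodic -- irreducible and, by Assumption~\ref{a-consideration}, aperiodic -- the convergence theorem for finite ergodic Markov chains gives $\init[]{}\,\mat[]{\M}^{\period}\to\boldsymbol{v}$ as $\period\to\infty$, where $\boldsymbol{v}$ is the unique stationary distribution and the limit does not depend on $\init[]{}$.

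The core of the argument is an elementary Abelian averaging lemma: if scalars $a_\period\to a$, then $\abs\sum_{\period=0}^{\infty}(1-\abs)^{\period}a_\period\to a$ as $\abs\to0^{+}$. Writing $a_\period=a+\varepsilon_\period$ with $\varepsilon_\period\to0$ and using $\sum_{\period}w_\period(\abs)=1$, the weighted average minus $a$ equals $\abs\sum_{\period}(1-\abs)^{\period}\varepsilon_\period$; splitting the sum at a threshold $T$ beyond which $|\varepsilon_\period|$ is small bounds this by $\abs\sum_{\period\le T}|\varepsilon_\period|$ plus a small term, and the first part vanishes with $\abs$ by fact (i). Applying the lemma coordinate by coordinate with $a_\period=\init[]{}\,\mat[]{\M}^{\period}$ and $a=\boldsymbol{v}$ yields $\bra[]{\init[]{}}{\mat{\M}}\to\boldsymbol{v}$, which is the claim.

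There is no genuine obstacle here; the one point to handle carefully is the tail-splitting in the Abelian lemma, where one must exploit that $w_\period(\abs)$ places vanishing total mass on any fixed initial block $\{0,\dots,T\}$, so that the regime in which $\init[]{}\,\mat[]{\M}^{\period}$ is already close to $\boldsymbol{v}$ dominates. As an alternative -- one that also delivers the convergence rate $O(\abs)$ -- I would argue spectrally: since $(1-\abs)\mat[]{\M}$ has spectral radius below $1$, $\bra[]{\init[]{}}{\mat{\M}}=\abs\,\init[]{}\bigl(I-(1-\abs)\mat[]{\M}\bigr)^{-1}$; decomposing $\mat[]{\M}=\mathbf{1}\boldsymbol{v}+S$, where $\mathbf{1}$ is the all-ones column vector, $\mathbf{1}\boldsymbol{v}$ is the rank-one stationary projection, and $S$ (using aperiodicity) has spectral radius strictly below $1$ with $S\,\mathbf{1}\boldsymbol{v}=\mathbf{1}\boldsymbol{v}\,S=0$, a short computation gives $\bra[]{\init[]{}}{\mat{\M}}=\boldsymbol{v}+\abs(\init[]{}-\boldsymbol{v})+\abs(1-\abs)\,\init[]{}\,S\bigl(I-(1-\abs)S\bigr)^{-1}$, whose last two terms vanish as $\abs\to0$ since $\bigl(I-(1-\abs)S\bigr)^{-1}\to(I-S)^{-1}$ is bounded.
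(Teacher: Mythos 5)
Your proof is correct, and it takes a genuinely different route from the paper's. The paper first sums the Neumann series to obtain the resolvent form $\abs\,\init[]{}\,(I-(1-\abs)\mat[]{\M})^{-1}$, then expresses the inverse through the adjugate and the characteristic polynomial: the unit eigenvalue of the stochastic matrix contributes a factor $\abs/(1-\abs)$ to $\det\left(\frac{1}{1-\abs}I-\mat[]{\M}\right)$ that exactly cancels the prefactor, so the limit $\abs\to0$ can be taken directly, and right-multiplying by $(I-\mat[]{\M})$ shows the limit annihilates $I-\mat[]{\M}$ --- precisely the stationarity identity the paper reuses in its definition of rationalizability, with uniqueness of the stationary distribution of an ergodic chain then pinning the limit down. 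Your primary argument instead reads the choice vector as an Abel average of the time-$\period$ laws and combines the finite-chain convergence theorem with a standard Abelian summation lemma; this is more elementary (no adjugates or resolvents), makes transparent why the initial distribution washes out, and your tail-splitting at a fixed horizon $T$ is handled correctly, at the cost of invoking the ergodic convergence theorem $\init[]{}\,\mat[]{\M}^{\period}\to\boldsymbol{v}$ as a black box. Your spectral alternative is closer in spirit to the paper's resolvent computation but replaces the adjugate manipulation with the rank-one splitting $\mat[]{\M}=\mathbf{1}\boldsymbol{v}+S$; it additionally delivers the $O(\abs)$ convergence rate and the explicit first-order correction $\abs(\init[]{}-\boldsymbol{v})$, which neither the paper's proof nor your Abelian argument provides (the only point worth spelling out there is that the spectral radius of $S$ is below one, which follows from $S^{\period}=\mat[]{\M}^{\period}-\mathbf{1}\boldsymbol{v}\to 0$ and Gelfand's formula).
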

\begin{proof}
See Appendix \ref{proof:limit}.
\end{proof}
We denote the generated stochastic choice function as $\alpha \rightarrow 0$ by $\lra[]{\init[]{}}{\mat{\M}}$. 
\begin{definition}
A stochastic choice function $\bp[\Nset]$ is rationalizable with an \msc[] if there exists a tuple $\langle \mat[]{\M},\init[]{}\rangle$ satisfying Assumptions 1 -- 3 such that $\bp[\M]=\lra[]{\init[]{}}{\mat{\M}}$ for all $\M\in \Nset$, where ${\lra[]{\init[]{}}{\mat{\M}}(I-\mat[]{\M}) = \boldsymbol{0}}$.
\end{definition}

The above equality follows immediately from Proposition~\ref{prop:limit}. Note that if the Markov chain is irreducible, the choice function is unique. If it is not irreducible, each communicating class has its own unique stationary distribution and hence, the choice function depends on the initial distribution $\init{}$. 
\section{Comparability restrictions}
\label{sec:comp-restr}
In this section, we examine how decision makers' inability to make certain pairwise comparisons affects their final choices. Such comparability restrictions between pairs of alternatives may arise from cognitive limitations, especially when many alternatives are presented simultaneously, as in the supermarket example mentioned in the Introduction. This concept is supported by empirical evidence showing that jumps between fixations mostly occur between nearby alternatives \citep{Chandon2009}. Hence, different arrangements of the same alternatives can impact their pairwise comparability. Figure~\ref{fig:shelf} illustrates such a scenario, where the decision maker cannot directly compare diagonally positioned pairs when they are displayed on a grid.
		\begin{figure}[h!]
\centering
    \begin{subfigure}{0.45\textwidth}
        \centering
        \begin{tikzpicture}[->,>=stealth',shorten >=1pt,auto,node distance=4cm]
			\begin{scope}[every node/.style={circle,draw,minimum size = 0.8cm}]
			    \node (i) at (0,0) {$i$};
			    \node (j) at (2.5,0) {$j$};
			    \node (l) at (0,2.5) {$l$};
			    \node (k) at (2.5,2.5) {$k$};
			\end{scope}
			\begin{scope}[]
			    \path [<->] (i) edge node {} (j);
			    \path [<->] (j) edge node {} (k);
			    \path [<->] (k) edge node {} (l);
			    \path [<->] (l) edge node {} (i);
			    \draw (j) to[out=25,in=-25,looseness=4] (j); 
	    		\draw (i) to[out=155,in=205,looseness=4] (i); 
	    		\draw (k) to[out=25,in=-25,looseness=4] (k); 
				\draw (l) to[out=155,in=205,looseness=4] (l);
			\end{scope}
			\end{tikzpicture}
    \end{subfigure}\hfill
    \begin{subfigure}{0.45\textwidth}
        \centering
        \begin{tikzpicture}[->,>=stealth',shorten >=1pt,auto,node distance=4cm]
			\begin{scope}[every node/.style={circle,draw,minimum size = 0.8cm}]
			    \node (i) at (0,0) {$i$};
			    \node (j) at (2.5,0) {$j$};
			    \node (k) at (0,2.5) {$k$};
			    \node (l) at (2.5,2.5) {$l$};
			\end{scope}
			\begin{scope}[]
			    \path [<->] (i) edge node {} (j);
			    \path [<->] (j) edge node {} (l);
			    \path [<->] (l) edge node {} (k);
			    \path [<->] (k) edge node {} (i);
				\draw (j) to[out=25,in=-25,looseness=4] (j); 
	   			\draw (i) to[out=155,in=205,looseness=4] (i); 
				\draw (l) to[out=25,in=-25,looseness=4] (l); 
				\draw (k) to[out=155,in=205,looseness=4] (k);
			\end{scope}
			\end{tikzpicture}
    \end{subfigure}
    \caption{Limitations in a decision maker's perception can lead to varying pairwise comparability restrictions depending on the item positioning.}
    \label{fig:shelf}
\end{figure}
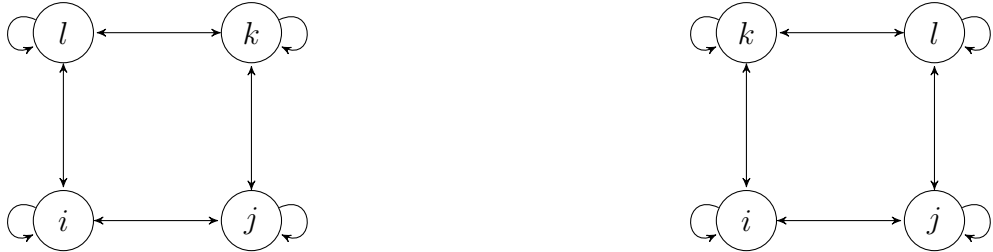

Additionally, a choice problem designer might restrict the visibility of the relevant items, limiting the decision maker's ability to compare certain pairs. For example, changes in the network of related products of a platform's recommendation system can influence the final purchasing decision, even if the decision maker would eventually consider the same set of items without time constraints.

We formalize the concept of \comp{}s below. We use the notation $\odot$ for element-wise matrix multiplication.

\begin{definition}
A \comp{} of an \msc[] $\langle \mat[]{\M},\init[]{}\rangle$ is a symmetric matrix $R(\M)$ with elements $r_{ij}(\M)\in\set{0,c}$ for $i\neq j$ and $c>0$ such that $R(\M)\odot \mat{\M}$ is a right stochastic matrix with the same communicating classes as $\mat{\M}$.
\end{definition}
In other words, \comp{}s prevent the decision maker from making a direct transition between pairs of alternatives, while keeping the reachable set of alternatives unchanged and scaling the transition probabilities between non-restricted pairs with the same factor. We say that an \msc[] is \textit{robust to a \comp{}}~$R(\M)$ if $\lra[]{\init[]{}}{\mat{\M}} = \lra[]{\init[]{}}{R(\M)\odot\mat{\M}}$.

In the following, we identify and characterize those classes of \msc[]s, which are robust to any \comp{}s. This is useful for two main reasons. First, it justifies the use of different presentation formats and simplifications of the choice problem (say, presenting items as a list instead on a grid) in cases when decision makers are not affected by \comp{}s. Second, \comp{}s can aid decision-makers in overcoming behavioral biases such as the attraction effect as discussed in the example below.
\begin{example}
\label{ex:decoy}
The attraction effect occurs when adding an asymmetrically dominated alternative to a choice set increases the choice probability of the dominating alternative. This phenomenon has been consistently documented in a variety of choice contexts and across different species \citep{Huber1982, Castillo2020}. The attraction effect is inconsistent with standard stochastic choice axioms such as regularity and independence of irrelevant alternatives. The \msc[] can accommodate this behavioral bias and provide suggestions for its mitigation. 

More specifically, in any \msc[] over $\all= \set{i, j, k}$ with strictly positive transition probabilities except $\q[c]{ik}{i,k}=q_{ik}(\all)=0$ (\ie, the target is alternative~$i$ and its decoy is alternative~$k$) the relative choice frequency of the target vs.\ the competitor increases when the decoy is present.
Moreover, the target is chosen more frequently from the triple than from the pair in absolute terms, $\lra[i]{}{\mat{\all}}>\lra[i]{}{\mat{\set{i,j}}}$, if and only if $q_{ki}(\all)>q_{ji}(\all)$, \ie the decision maker is more likely to transition to the target from the decoy than from the competitor. Thus, the attraction effect should vanish if the comparability between the target and the decoy is restricted, \ie $r_{ik}(\all)=r_{ki}(\all)=0$. More details and proofs can be found in Appendix~\ref{proof:decoy}.
\end{example}

\subsection{Reversible MSC models}
In this subsection we characterize the Markovian decision-making process of agents whose choices are robust to \comp{}s. This class of \msc[]s satisfies a well-known property of Markov chains called \textit{reversibility} \citep{Levin2017}. A Markov chain is reversible if it satisfies the following detailed balance conditions
\begin{equation}
\q{ji}{}\lra[j]{}{\mat{\M}}=\lra[i]{}{\mat{\M}}\q{ij}{}, \forall i,j \in\M.
\label{balance}
\end{equation}
The detailed balance equations postulate that the flow of probability mass is balanced for each pair of states. In our setting, detailed balance means that the average number of saccades from~$i$ to~$j$ should be the same as the saccades from~$j$ to~$i$ in the long run.\footnote{ A necessary and sufficient condition for reversibility of a Markov chain is the Kolmogorov's criterion. It states that a Markov chain is reversible if and only if for all cycles $\cya[\M']\in\cyaset[\M']$ and all $\M'\in\Mset$ holds
	\begin{equation}
	\label{kolmogorov}
	\prod_{(k,l)\in\cya[\M']}\q{kl}{}=\prod_{(k,l)\in\cya[\M']}\q{lk}{}.
	\end{equation}
	}
	 We call an \msc[] reversible on $\M$ if the corresponding Markov chain with states $\M$ satisfies reversibility. We say that an \msc[] is reversible if it is reversible on all $\M\in\Nset$. 

The following theorem is our first main result. It characterizes reversible \msc[]s in terms of \comp{}s and decision maker's choices from a given menu.

\begin{theorem}
\label{T1} The following are equivalent: A stochastic choice function $\bp[\M]$ is
	\begin{enumerate}
	\item[(i)] rationalizable only by reversible \msc[]s on~$\M,$
	\item[(ii)] robust to any \comp{}s of a rationalizing model,
	\item[(iii)] such that $\cyrset[]{\M} = \cyrset[0]{\M}$. 
%
%
	\end{enumerate}	        
\end{theorem}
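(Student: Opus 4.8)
The plan is to show $(\mathrm i)\Leftrightarrow(\mathrm{iii})$ and $(\mathrm i)\Leftrightarrow(\mathrm{ii})$, reading (i) as the statement that \emph{every} model rationalizing $\bp[\M]$ is reversible on $\M$. Everything hinges on a lemma that converts the purely behavioral sign $\operatorname{sgn}\del[ij]{\M}$ into a statement about an \emph{arbitrary} rationalizing transition matrix. Fix a rationalizing $\langle\mat[]{\M},\init[]{}\rangle$. Stationarity on the binary menu gives $\p[]{i}{\set{i,j}}\q[c]{ij}{i,j}=\p[]{j}{\set{i,j}}\q[c]{ji}{i,j}$, and Assumption~\ref{a-TR-IIA} lets me rewrite the binary transition ratio as the ratio on $\M$; substituting into $\del[ij]{\M}=\p{i}{\M}\p[]{j}{\set{i,j}}-\p[]{i}{\set{i,j}}\p{j}{\M}$ (with a short case check for the at-most-one-sided binary zero allowed by Assumption~\ref{a-pc}) yields: whenever $\q{ij}{}+\q{ji}{}>0$, $\del[ij]{\M}$ has the same sign as the net flux $F_{ij}:=\p{i}{\M}\q{ij}{}-\p{j}{\M}\q{ji}{}$. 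Consequently $\operatorname{sgn}\del[ij]{\M}$ is model-independent; the antisymmetric array $(F_{ij})$ is a circulation because $\bp[\M](I-\mat[]{\M})=\boldsymbol 0$, i.e.\ $\sum_j F_{ij}=0$; and $(\del[ij]{\M})_{i,j\in\M}\equiv\boldsymbol 0$ is precisely the detailed balance condition~\eqref{balance}, i.e.\ reversibility on $\M$.

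\textbf{Reversibility $\Leftrightarrow$ (iii).} If $\cyrset[]{\M}=\cyrset[0]{\M}$ and some rationalizing $\mat[]{\M}$ were not reversible, then $(F_{ij})\not\equiv\boldsymbol 0$; decomposing this circulation into non-negative cycle flows produces a directed cycle $i_1\to\cdots\to i_m\to i_1$ on some $\M'\in\Mset$ with $F_{i_ti_{t+1}}>0$ on every edge, hence (by the lemma) $\del[i_ti_{t+1}]{\M}>0$ on every edge, so $\cya[\M']\in\cyrset[+]{\M}\setminus\cyrset[0]{\M}$, a contradiction; thus every rationalizing model is reversible. Conversely, if $\cya[\M']\in\cyrset[+]{\M}$ lies outside $\cyrset[0]{\M}$, rewriting each $\del[i_ti_{t+1}]{\M}>0$ via binary stationarity as $\p{i_t}{\M}/\p{i_{t+1}}{\M}>\q[c]{i_{t+1}i_t}{i_t,i_{t+1}}/\q[c]{i_ti_{t+1}}{i_t,i_{t+1}}$ and multiplying around the cycle — the left‑hand sides telescope to $1$ — gives $\prod_t\q[c]{i_ti_{t+1}}{i_t,i_{t+1}}>\prod_t\q[c]{i_{t+1}i_t}{i_t,i_{t+1}}$; Assumption~\ref{a-TR-IIA} transfers this to $\mat[]{\M}$ and violates Kolmogorov's criterion~\eqref{kolmogorov}, so no rationalizing model is reversible. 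Hence $(\mathrm i)\Leftrightarrow(\mathrm{iii})$.

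\textbf{Reversibility $\Leftrightarrow$ (ii).} If $\mat[]{\M}$ is reversible, $\bp[\M]$ solves~\eqref{balance}; for any \comp{} $R(\M)$, the matrix $R(\M)\odot\mat[]{\M}$ scales each retained off‑diagonal entry by $c$ and zeros each deleted pair symmetrically, so $\bp[\M]$ still satisfies detailed balance for it and is a stationary vector; since $R(\M)$ preserves the communicating classes — hence the within‑class stationary distributions and the $\init[]{}$‑weights composing $\lra[]{\init[]{}}{\cdot}$ — one gets $\lra[]{\init[]{}}{R(\M)\odot\mat[]{\M}}=\bp[\M]$, i.e.\ the model is robust. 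Conversely, if $\mat[]{\M}$ is not reversible then $(F_{ij})\not\equiv\boldsymbol 0$; any edge $ij$ with $F_{ij}\neq0$ cannot be a bridge of its communicating class (the total flux of a circulation across a one‑edge cut is zero), so the \comp{} with $c=1$ that deletes the pair $ij$ is admissible (right‑stochasticity uses $\q[]{ii}{\M}>0$ from Assumption~\ref{a-consideration}); writing $\widehat Q=R(\M)\odot\mat[]{\M}$, one computes $(\bp[\M]\widehat Q)_j=\p{j}{\M}-F_{ij}\neq\p{j}{\M}$, so $\bp[\M]$ is not stationary for $\widehat Q$ while $\lra[]{\init[]{}}{\widehat Q}$ is, and the choice from $\M$ changes. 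Hence $(\mathrm i)\Leftrightarrow(\mathrm{ii})$.

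\textbf{Main obstacle.} The argument is smooth when $\bp[\Nset]$ is positive: then all binary choices are positive, Assumption~\ref{a-TR-IIA} forces $\q{ij}{}>0\Leftrightarrow\q{ji}{}>0$, the transition graph on $\M$ is ``undirected'', and every cycle extracted above is a genuine cycle of $\mat[]{\M}$. The real difficulty is the non‑positive case: a directly incomparable pair ($\q{ij}{}=\q{ji}{}=0$) can have $\del[ij]{\M}\neq0$, so $F_{ij}=0$ yet $\del[ij]{\M}\neq0$ and the lemma's sign correspondence fails; moreover a rationalizing chain can be reducible with transient states, so for ``reversibility $\Rightarrow$ robustness'' one must additionally check that comparability restrictions leave the absorption probabilities into the recurrent classes unchanged. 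Incomparable pairs never sit on the flux‑cycles (those have $F>0$), but they can sit on $\del[\cdot\cdot]{\M}$‑cycles, and there the ``$\neg(\mathrm{iii})\Rightarrow$ not reversible'' step requires \emph{constructing}, rather than merely inspecting, a non‑reversible rationalization whose communicating‑class structure realizes the given $\del[\cdot\cdot]{\M}$‑cycle as an honest positive cycle — using the magnitude freedom Assumption~\ref{a-TR-IIA} leaves while keeping $\bp[\M]$ as the stationary vector. The remaining bookkeeping (stochasticity of the single‑edge \comp{}s for $c=1$, and ``classes preserved'' $\Leftrightarrow$ ``non‑bridge'') is routine.
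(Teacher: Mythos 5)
Your flux lemma (for a rationalizing model, $\operatorname{sgn}\del[ij]{\M}=\operatorname{sgn}\bigl(\p{i}{\M}\q{ij}{}-\p{j}{\M}\q{ji}{}\bigr)$ whenever $\q{ij}{}+\q{ji}{}>0$) is correct, and your proof that (iii) implies reversibility of \emph{every} rationalizing model --- decomposing the nonzero circulation $(F_{ij})$ into directed cycles with strictly positive flow and reading off a cycle in $\cyrset[+]{\M}$ --- is a valid and arguably more elementary substitute for the paper's route through Lemma~\ref{lem4} and Gordan's theorem. The other two directions, however, have genuine gaps, exactly at the points you label the ``main obstacle'' and then leave unresolved.

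For (i)\,$\Rightarrow$\,(iii) you argue that a cycle in $\cyrset[+]{\M}$ forces $\prod_t\q[c]{i_ti_{t+1}}{i_t,i_{t+1}}>\prod_t\q[c]{i_{t+1}i_t}{i_t,i_{t+1}}$ and that TR-IIA ``transfers this to $\mat[]{\M}$'', violating Kolmogorov's criterion~\eqref{kolmogorov}, so that \emph{no} rationalizing model is reversible. That conclusion is false: the model with all off-diagonal entries of $\mat[]{\M}$ equal to zero and $\init[]{}=\bp[\M]$ (which the paper notes rationalizes every $\bp[\M]$) satisfies detailed balance~\eqref{balance} trivially, so a reversible rationalization always exists; relatedly, your parenthetical claim that $\del[ij]{\M}\equiv 0$ ``is precisely'' reversibility fails in the same way (cf.\ Example~\ref{ex:T1}). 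The transfer step breaks down because Assumption~\ref{a-TR-IIA} is vacuous on pairs with $\q{ij}{}=\q{ji}{}=0$, so the strict binary inequality says nothing about the Kolmogorov products on $\M$ (both may be zero). What the equivalence actually needs under $\neg$(iii) is the \emph{existence} of a non-reversible rationalizing model; the paper obtains it from Gordan's theorem applied to $\mathcal{D}(\M)\sav{}=\boldsymbol{0}$ plus the explicit construction~\eqref{qij} and a verification that detailed balance fails on some pair of $\G$. You acknowledge that such a construction is required but never carry it out, so this direction is missing.

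For (ii)\,$\Rightarrow$\,(i) the admissibility of your single-pair deletion rests on an undirected bridge/cut argument, but a \comp{} must preserve \emph{communicating classes}, i.e.\ strong connectivity, and TR-IIA permits one-directional transitions on $\M$ when a binary probability is degenerate. Concretely, take a rationalization whose only positive transitions form the one-directional cycle $1\to 2\to 3\to 1$ (with $\p[]{1}{\set{1,2}}=\p[]{2}{\set{2,3}}=\p[]{3}{\set{1,3}}=0$): it is non-reversible and has $F\neq 0$ on every pair, yet no pair is an undirected bridge while deleting \emph{any} pair destroys the single communicating class, so none of your proposed restrictions is admissible, and the admissible (uniform-scaling) restrictions leave the stationary distribution unchanged. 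Hence your argument delivers (ii)\,$\Rightarrow$\,(i) only under positivity of the binary data (where TR-IIA makes the transition graph genuinely two-directional); in general the class-splitting pairs must be handled separately, as the paper attempts via the stationarity equations of restricted chains, and again you flag this without doing it. The transient-state caveat you raise for (i)\,$\Rightarrow$\,(ii) is comparatively minor and is shared with the paper's own treatment of Proposition~\ref{prop:comparability1}.
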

\begin{proof}
See Appendix~\ref{proof:T1}.
\end{proof}
Statements (i) and (ii) of the above theorem imply that if the decision-making process is reversible, restricting some pairwise comparisons does not affect final choices. Additionally, if the process is reversible without \comp{}s, different arrangements of alternatives, as shown in Figure~\ref{fig:shelf}, do not impact choice frequencies, even if the decision maker can only compare nearby options. Similarly, these decision makers' choices are unaffected by the number of related products on an e-commerce website, as long as they can access the same items from any starting point.

Following Theorem~\ref{T1}, an analyst can use \comp{}s to determine if the decision-making process is reversible on all~$\M\in\Nset$. To this end, the analyst needs to verify that the final choices do not change in response to \textit{any} of the possible \comp{}s. When the stochastic choice function is positive, then the analyst only needs to check the effect of \comp{}s when the menu equals the complete set of alternatives $\M=\all$. This is because if stochastic choice data on~$\all$ is rationalizable by a reversible \msc[], it can be shown using TR-IIA and Kolmogorov's condition that the \msc[] is reversible on all subsets. 

Statement (iii) offers an alternative way to identify reversible \msc[]s by examining the properties of the stochastic choices. It states that there is no sign-consistent cycle of alternatives for which the ratio of choice probabilities is either strictly higher or strictly lower in the larger choice set compared to the binary sets for each pair of alternatives in the cycle. The characterization result is particularly useful if an analyst wants to verify reversibility, but cannot control \comp{}s, such as when they are due to the decision maker's unobservable cognitive limitations.

The two characterizations of reversible \msc[]s can complement each other to make predictions and uncover the true generating model's properties. The equivalence of Theorem~\ref{T1}(i) and (iii) implies that if only choice data for a particular menu is observed, \comp{}s can simplify the choice problem without affecting the decision maker's choices. Additionally, if a stochastic choice function violates Theorem~\ref{T1}(iii) and is rationalizable by a non-reversible and a reversible model, \comp{}s can help distinguish between the two. If any such restriction changes the final choices, the true generating model is non-reversible. The following example illustrates Theorem~\ref{T1}.

\begin{table}
\begin{center}
\caption{Choice probabilities of alternatives $i,j,k,l$ (rows) from binary choice sets (columns) used in Examples~\ref{ex:T1}-\ref{ex:T3}.\label{tab:example}}
\begin{tabular}{ c| c c c c c c}
\hline
 	& $\{i,j\}$ & $\{i,k\}$ & $\{i,l\}$ & $\{j,k\}$ & $\{j,l\}$ & $\{k,l\}$ \\ 
 	\hline
$i$	& 0.5		& 0.5 		& 0.5 		&			&			&			\\  
$j$ & 0.5		&			& 			& 0.6		& 0.5		& 			\\   
$k$	& 			& 0.5		&			& 0.4		&			& 0.4		\\ 
$l$	& 			& 			& 0.5		&			& 0.5		& 0.6 		\\
\hline
\end{tabular}
\end{center}
\end{table}
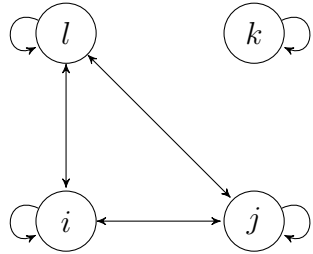
\begin{figure}[ht!]
\centering
    \begin{tikzpicture}[->,>=stealth',shorten >=1pt,auto,node distance=4cm]
	\begin{scope}[every node/.style={circle,draw,minimum size = 0.8cm}]
	    \node (i) at (0,0) {$i$};
	    \node (j) at (2.5,0) {$j$};
	    \node (l) at (0,2.5) {$l$};
	    \node (k) at (2.5,2.5) {$k$};
	\end{scope}
	\begin{scope}[]
	    \path [<->] (i) edge node {} (j);
	    \path [<->] (l) edge node {} (i);
	    \path [<->] (l) edge node {} (j);
	    \draw (j) to[out=25,in=-25,looseness=4] (j); 
	    \draw (i) to[out=155,in=205,looseness=4] (i); 
	    \draw (k) to[out=25,in=-25,looseness=4] (k); 
		\draw (l) to[out=155,in=205,looseness=4] (l);
	\end{scope}
	\end{tikzpicture}    
    \caption{A reversible \msc[] rationalizing the choice function given in Example~\ref{ex:T1}.}
    \label{fig:T1}
\end{figure}
\begin{example}
Consider the choice function from binary sets given in Table~\ref{tab:example} and let $\bp[\set{i,j,k,l}] = \bp[\all] = (0.2,0.2,0.4,0.2)$\footnote{Note that although this stochastic choice function and the ones used in Examples~\ref{ex:T1}-\ref{ex:T3} all satisfy regularity, it is not a necessary condition for rationalizability with an \msc[] as opposed to other models of stochastic choice such as random utility models.}. 
We first verify that the stochastic choice function satisfies Theorem~\ref{T1}(iii), \ie we need to show $\cyrset[-]{\all}=\cyrset[+]{\all}=0$ for any $\M'\in\Nset$. The only pairs of alternatives whose choice ratio from $\all$ do not equal their choice ratio from the binary set are $(i,k)$, $(j,k)$, and $(k,l)$ and their reverse pairs, \ie $\del[ik]{\M}\neq 0$, $\del[jk]{\M}\neq 0$, and $\del[kl]{\M}\neq 0$. Since all cycles containing these pairs contain two reverse pairs, the condition in Theorem~\ref{T1}(iii) is trivially satisfied.

As we prove in Appendix~\ref{app-ex:T1}, an \msc[] that has $\init[k]{}=0.4$ and positive off-diagonal transition probabilities $q_{ij}(\M) = q_{il}(\M) = q_{ji}(\M) = q_{jl}(\M) = q_{lj}(\M)=0.1$ rationalizes the stochastic choice data. Figure~\ref{fig:T1} visualizes the transitions occurring with positive probability. Note that this is the greatest number of possible direct comparisons consistent with the choice data. It is evident that this model is reversible, because it holds that 
$\q{ij}{}\q{jl}{}\q{li}{}=\q{ji}{}\q{lj}{}\q{il}{}$ and in all other cycles there are transitions happening with zero probability. 

Finally, we can verify that the rationalizing model is robust to \comp{}s. For example, if we consider a \comp{} with $r_{il}(\M)=r_{li}(\M)=0$, the model $\langle R(\M)\odot\mat[]{\M},\init[]{}\rangle$ also rationalizes the choice data. 
\label{ex:T1}
\end{example}

Comparability restrictions are assumed to have a uniform effect on all transition probabilities between non-restricted pairs. This assumption is plausible if the manipulation of the choice problem does not affect the accessibility and salience of non-restricted pairs. However, choices generated by reversible models are robust to an even larger class of \comp{}s. 

\begin{definition}
A weak \comp{} of an \msc[] $\langle \mat[]{\M},\init[]{}\rangle$ is a symmetric matrix $R(\M)$ with non-negative off-diagonal elements such that $R(\M)\odot \mat{\M}$ is a right stochastic matrix with the same communicating classes as $\mat{\M}$.
\end{definition}
In essence, weak \comp{}s can have an asymmetric effect on non-restricted pairs. The symmetry of the matrix $R(\M)$ ensures TR-IIA holds for all pairs. 
 \begin{proposition}
  \label{prop:comparability1}
  If an \msc[] is reversible, the generated stochastic choice function $\bp[\Nset]$ is robust to all weak \comp{}s.
\end{proposition}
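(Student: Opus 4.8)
The plan is to show that reversibility of the underlying Markov chain is preserved under element-wise multiplication by any symmetric non-negative matrix $R(\M)$, and that it pins down the stationary distribution up to the effect of such a multiplication. First I would recall that by Proposition~\ref{prop:limit} the generated choice function $\lra[]{\init[]{}}{\mat{\M}}$ equals the stationary distribution of the chain with transition matrix $\mat{\M}$ (one per communicating class if the chain is reducible). So it suffices to compare the stationary distribution of $\mat{\M}$ with that of $\tilde{Q}(\M) := R(\M)\odot\mat{\M}$, and in fact to do so communicating class by communicating class, since $R(\M)$ preserves the communicating classes by definition of a weak \comp{}.

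The key step is the following: if $\mat{\M}$ is reversible with stationary vector $\boldsymbol{\mu}$ satisfying detailed balance $\mu_i \q{ij}{} = \mu_j \q{ji}{}$ for all $i,j\in\M$, then I claim $\tilde{Q}(\M)$ is also reversible, with stationary vector $\boldsymbol{\nu}$ given (on each communicating class) by $\nu_i \propto \mu_i \tilde{q}_{ii}/q_{ii}$... actually the cleaner route is to verify detailed balance directly. Since $R(\M)$ is symmetric, $r_{ij}(\M) = r_{ji}(\M)=: r_{ij}$, so $\tilde{q}_{ij} = r_{ij}\q{ij}{}$ and $\tilde{q}_{ji} = r_{ij}\q{ji}{}$ for $i\neq j$. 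Hence $\mu_i \tilde{q}_{ij} = r_{ij}\,\mu_i \q{ij}{} = r_{ij}\,\mu_j \q{ji}{} = \mu_j \tilde{q}_{ji}$, so the very same vector $\boldsymbol{\mu}$ satisfies detailed balance for $\tilde{Q}(\M)$ on off-diagonal entries. The one subtlety is normalization: $\tilde{Q}(\M)$ is right stochastic by assumption, so its diagonal $\tilde{q}_{ii}$ absorbs whatever mass the rescaled off-diagonal row no longer carries, and detailed balance on the diagonal ($\mu_i\tilde{q}_{ii} = \mu_i\tilde{q}_{ii}$) is trivial. A vector satisfying detailed balance is automatically stationary for a stochastic matrix (sum the balance equations over $j$), and on an irreducible class it is the unique stationary distribution after normalization. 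Therefore, restricting $\boldsymbol{\mu}$ to any communicating class $\M'$ and renormalizing gives the stationary distribution of $\tilde{Q}(\M)$ on $\M'$ — which is exactly the same as the stationary distribution of $\mat{\M}$ on $\M'$. Combining across classes with the common initial distribution $\init[]{}$ (the weights $\sum_{i\in\M'}\init[i]{}$ on each class $\M'$ are the same for both chains), I conclude $\lra[]{\init[]{}}{\mat{\M}} = \lra[]{\init[]{}}{R(\M)\odot\mat{\M}}$ for every menu $\M$, i.e. the model is robust to the weak \comp{} $R(\M)$. Since $\M$ was arbitrary and this holds for all weak \comp{}s, the proposition follows.

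I do not expect a serious obstacle here; the main thing to be careful about is the reducible case, ensuring that (a) the mass assigned to each communicating class by $\init[]{}$ is unchanged, which holds because $R(\M)$ does not alter the classes, and (b) that ``reversible'' in the statement is understood to mean detailed balance holds within each class with respect to the relevant stationary vector, so that the argument above applies verbatim on each class. A secondary point worth a sentence is that one should check $\tilde q_{ii}>0$ is not actually needed — the argument only uses right-stochasticity of $\tilde Q(\M)$ (guaranteed by the definition of weak \comp{}) and the off-diagonal balance relations — so aperiodicity and the applicability of Proposition~\ref{prop:limit} to $\tilde Q(\M)$ need a brief justification, but this follows because $\tilde q_{ii} \geq q_{ii}$ when $\sum_{j\neq i} r_{ij}q_{ij} \le \sum_{j\neq i}q_{ij}$ fails to be automatic — here I would instead note that right-stochasticity plus the fact that $\boldsymbol\mu$ is a strictly positive (on each class) stationary vector already forces aperiodicity is false in general; the safe statement is simply that $\tilde Q(\M)$ being a reversible chain with a stationary distribution makes $\lra[]{}{\cdot}$ well-defined as the $\alpha\to0$ limit in the sense of the paper's Definition, and Proposition~\ref{prop:limit} applies once we also observe $\tilde q_{ii}>0$ whenever $q_{ii}>0$ and the $r_{ij}\le$ the implicit scaling bound — in the end the cleanest presentation is to invoke that a weak \comp{} by definition yields a right stochastic matrix with the same classes, and each such class, being finite and irreducible, has a unique stationary distribution, which we have just identified with that of $\mat{\M}$.
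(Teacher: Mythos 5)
Your proof is correct and takes essentially the same route as the paper: both arguments reduce to the observation that symmetry of $R(\M)$ together with detailed balance for $\mat{\M}$ gives $\rho_i\, r_{ij}q_{ij}(\M) = \rho_j\, r_{ji}q_{ji}(\M)$, so the original stationary vector remains stationary (indeed in detailed balance) for $R(\M)\odot\mat{\M}$, with the reducible case handled class by class because a weak \comp{} preserves both the communicating classes and the initial distribution. The meandering final paragraph on aperiodicity can simply be cut --- as in the paper, all you need is that each finite, preserved communicating class of the restricted chain has a unique stationary distribution, which you have already identified with that of $\mat{\M}$.
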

\begin{proof}
See Appendix~\ref{proof:comparability1}.
\end{proof}
In other words, agents who follow a reversible decision-making process cannot be nudged to choose certain alternatives more frequently by imposing comparability restrictions, even if these affect non-restricted pairs asymmetrically. Thus, the choices are robust across various presentations of the decision problem. For example, the choices of such decision makers are unaffected not only by reducing the number of possible comparisons, but also by reshuffling the positions of the items, leading to asymmetric changes in transition probabilities due to comparability restrictions. We conclude this subsection with the following example illustrating Proposition~\ref{prop:comparability1}. 
\addtocounter{example}{-1}
\begin{example}[continued]
Observe that an \msc[] in which the previously discussed comparability restriction between $i$ and $l$ affects asymmetrically the likelihood to make a transition between $i$ and $j$ such that $r_{ij}(\M)=2$, while keeping all other off-diagonal elements of the $R(\M)$ matrix equal to 1, will rationalize the same stochastic choice function.
\end{example}

\subsection{Pairwise and fully comparable MSC models}

In the previous section, we establish that choices generated by a reversible \msc[] cannot be manipulated with \comp{}s. Another type of decision maker unaffected by menu rearrangements is one whose cognitive abilities allow all pairwise comparisons. In this section, we characterize the choices of such decision makers, captured by classes of \msc[]s we call pairwise and fully comparable.

\begin{definition} An \msc[] is \textit{pairwise comparable} if it holds for all $i,j\in\M$ and all $\M\in\Nset$ that $\q{ij}{}=0$ implies $\q{ji}{}>0$. An \msc[] is \textit{fully comparable} if $\q{ij}{}>0$ for all $i,j\in\M$ and all $\M\in\Nset$. 
\end{definition}
This definition of pairwise comparability is a generalization of Assumption~\ref{a-pc} about pairwise comparability on binary sets to all menus. If the model is fully comparable, the agent makes all possible transitions with positive probability. Our characterization results of these classes of \msc[]s are based on the existence of particular sign-consistent cycles w.r.t.\ the stochastic choice function. 
\begin{definition}
The stochastic choice function $\bp[\M]$ is \btc over a pair of alternatives $(i,j)\in\M$ whenever if $\del[ij]{\M}\neq 0$, $\exists \cya[\M']$ with $(i,j)\in\cya[\M']$ and $\cya[\M']\in\cyrset[+/-]{\M}$ for $\M'\in\Mset$.
\end{definition}

Furthermore, the characterization of fully comparable \msc[]s requires in addition positivity. 

\begin{theorem}
\label{T2} 

The following are equivalent: A stochastic choice function $\bp[\M]$ is
	\begin{enumerate}
	\item[(i)] rationalizable by a pairwise (fully) comparable \msc[] on $\M$,
	\item[(ii)] (positive and) \btc over all pairs of alternatives.
	
		
	\end{enumerate}	        
%
%
%

\end{theorem}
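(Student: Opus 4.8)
\textbf{Proof proposal for Theorem~\ref{T2}.}

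The plan is to prove the equivalence by establishing the two directions separately, treating the pairwise-comparable case as the main argument and then noting the fully-comparable case as the strengthening obtained by adding positivity. The backbone of both directions is the algebraic relationship between the transition matrix $\mat{\M}$, the stationary distribution $\lra[]{\init[]{}}{\mat{\M}}$, and the weighted differences $\del[ij]{\M}$. The key observation I would exploit first is that, by TR-IIA (Assumption~\ref{a-TR-IIA}), the \emph{ratio} $\q{ij}{}/\q{ji}{}$ equals $\q[c]{ij}{i,j}/\q[c]{ji}{i,j}$ whenever both are positive, so the sign of $\del[ij]{\M}$ is governed entirely by whether the flow $\lra[i]{}{\mat{\M}}\q{ij}{}$ exceeds $\lra[j]{}{\mat{\M}}\q{ji}{}$ relative to the binary benchmark. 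Concretely, I expect a lemma of the form: $\del[ij]{\M}$ has the same sign as $\lra[i]{}{\mat{\M}}\q{ij}{} - \lra[j]{}{\mat{\M}}\q{ji}{}$ (the net probability flow from $i$ to $j$), after clearing the positive normalizing constants coming from $\p[]{i}{\set{i,j}}$, $\p[]{j}{\set{i,j}}$ and the binary stationary distribution. This reduces statement (ii) to a statement about net flows in the Markov chain.

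For (i) $\Rightarrow$ (ii): assume a pairwise comparable \msc[] rationalizes $\bp[\M]$. Fix a pair $(i,j)$ with $\del[ij]{\M}\neq 0$; WLOG the net flow from $i$ to $j$ is strictly positive. Since the stationary distribution satisfies global balance, the total net flow out of $\{i\}$ into the rest of $\M$ is zero, so there must be some alternative $k$ with strictly positive net flow from $k$ to $i$; iterating this ``follow the positive flow backwards'' argument and using finiteness of $\M$ produces a cycle $\cya[\M']$ on some $\M'\in\Mset$ along which every net flow is strictly positive in the cyclic direction — equivalently, by the sign lemma, every $\del[kl]{\M}$ with $(k,l)\in\cya[\M']$ is strictly positive. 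That cycle contains $(i,j)$ and lies in $\cyrset[+]{\M}$, which is exactly \btc. Pairwise comparability is what guarantees we can keep walking backwards: if $\q{ki}{}=0$ then $\q{ik}{}>0$, so the net flow on that edge cannot be the obstruction, and the conservation argument still forces a positive-flow predecessor. For the fully comparable case, positivity of $\bp[\M]$ is immediate since $\q{ij}{}>0$ for all pairs makes the chain irreducible with full-support stationary distribution.

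For (ii) $\Rightarrow$ (i): given $\bp[\M]$ that is \btc over all pairs (and positive, in the fully comparable case), I would construct the rationalizing model explicitly. First pin down the binary transition probabilities from $\bp[\set{i,j}]$ in the obvious way (the two-state limiting distribution determines the ratio $\q[c]{ij}{i,j}:\q[c]{ji}{i,j}$, and scale so Assumption~\ref{a-consideration} holds), ensuring Assumption~\ref{a-pc} via the natural convention when $\p[]{i}{\set{i,j}}\in\{0,1\}$. Then I need to choose off-diagonal entries $\q{ij}{}$ proportional to the binary ratios (to satisfy TR-IIA) and with a free positive scalar per ordered pair, such that the resulting stationary distribution is exactly $\bp[\M]$. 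This is a feasibility problem: find positive weights $w_{ij}$ on the edges (respecting the TR-IIA ratio constraints, i.e.\ only one scalar per \emph{unordered} pair once the binary ratio is fixed) so that $\bp[\M]$ is stationary, i.e.\ $\sum_{i}\bp[\M](i)\,w_{ij}\cdot(\text{ratio stuff}) = \bp[\M](j)\sum_{k}w_{jk}\cdot(\dots)$ for all $j$. The \btc hypothesis is precisely what makes this system solvable: the net-flow requirements $\lra[i]{}{\mat{\M}}\q{ij}{} - \lra[j]{}{\mat{\M}}\q{ji}{}$ must have the sign of $\del[ij]{\M}$, and \btc says the ``wrong-sign-free cycle'' structure lets one route the required net flows around cycles of constant sign, so the flow-conservation equations at each node can be satisfied simultaneously. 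I would set up the edge weights by starting from a reversible baseline (all net flows zero, trivially stationary for the right distribution) and then adding, for each pair with $\del[ij]{\M}\neq 0$, a circulation of small magnitude around its sign-consistent cycle $\cya[\M']$; since adding a circulation around a cycle preserves stationarity, and each such circulation perturbs only the edges of its cycle in a controlled direction, one can make all the $\del$ signs come out right while keeping every $\q{ij}{}$ positive (for fully comparable) or at least satisfying pairwise comparability. Finally one checks the constructed model does satisfy Assumptions~1--3 and restricts to the correct binary behavior.

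The main obstacle I anticipate is the construction in (ii) $\Rightarrow$ (i): proving that the circulations around the various sign-consistent cycles can be superposed \emph{without} any edge weight being driven to zero or negative, and \emph{without} any unordered pair receiving inconsistent demands from the two directions (the TR-IIA ratio is fixed, so the two ordered-edge weights are not independent). Handling pairs with $\p[]{i}{\set{i,j}}\in\{0,1\}$ — where one binary transition probability is forced to zero — is a delicate sub-case, since then the cycle through $(i,j)$ must be oriented so that the zero edge carries no required net flow, and this is exactly where the $+/-$ freedom in the definition of \btc (rather than a fixed sign) is used. A secondary, more routine obstacle is verifying the sign lemma carefully, including the degenerate binary cases, and confirming that the equivalence is stated for a \emph{fixed} menu $\M$ so that no cross-menu consistency beyond TR-IIA on the relevant subsets is needed.
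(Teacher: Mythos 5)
Your proposal reaches the right result but by a genuinely different route from the paper's. The paper first reduces rationalizability to a homogeneous linear system $\mathcal{D}(\M)\sav{}=\boldsymbol{0}$ in one free transition scalar per ordered pair in $\G$ (Lemma~\ref{lem4}), and then applies Stiemke's lemma: a strongly positive solution exists iff no vector $\boldsymbol{z}$ satisfies $\boldsymbol{z}\mathcal{D}(\M)\geq\boldsymbol{0}$, and the impossibility of assigning values $v_k$ with $v_k>v_l$ for some pair and $v_k\geq v_l$ for all pairs in $\G$ is shown to coincide exactly with every pair of $\G$ lying on a sign-consistent cycle. You instead read the same system as flow conservation: under TR-IIA the net probability flow on an unordered pair is $c_{ij}\del[ij]{\M}$ for a single free scalar $c_{ij}\geq 0$, stationarity makes the net-flow field a circulation, (i)$\Rightarrow$(ii) follows by decomposing that circulation into directed cycles through each positively-flowing edge, and (ii)$\Rightarrow$(i) follows by superposing cycle circulations along the sign-consistent cycles that \btc provides. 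The two arguments are dual to one another (flow decomposition versus a theorem of the alternative); yours is more constructive and makes the role of the cycles transparent, while the paper's is shorter and lets Gordan's and Stiemke's lemmas handle Theorems~\ref{T1} and~\ref{T2} uniformly. Your sign lemma (sign of the net flow equals sign of $\del[ij]{\M}$ under pairwise comparability, including the degenerate binary cases) is correct and is essentially the paper's Lemma~\ref{lem1}.

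Two points need tightening, though neither is fatal. First, the naive ``follow the positive flow backwards'' walk can close a cycle that does not contain the edge $(i,j)$ you started from; you need the full flow-decomposition argument (cancel the cycle found, iterate, and note that the flow on $(i,j)$ must eventually be exhausted, so some simple cycle in the decomposition contains that edge). Second, there is no reversible baseline with positive weight on a pair with $\del[ij]{\M}\neq 0$: TR-IIA pins the ratio $\q{ij}{}/\q{ji}{}$ to the binary odds, so any positive weight on such a pair forces a nonzero net flow proportional to $\del[ij]{\M}$. The baseline must therefore put weight $\kappa\p[]{j}{\set{i,j}}$ only on pairs with $\del[ij]{\M}=0$ and weight zero on $\G$, with the superposed circulations supplying all the weight on $\G$ --- this is exactly the case split in the paper's explicit construction~\eqref{qij-T2}. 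Once this is done, your anticipated obstacle about ``inconsistent demands from the two directions'' dissolves: $\G$ contains only one orientation of each pair with $\del[ij]{\M}\neq 0$, and the single scalar $c_{ij}$ determines both ordered entries, so the net flow per unordered pair is one number, not two.
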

\begin{proof}
See Appendix~\ref{proof:T2}.
\end{proof}
Theorem~\ref{T2} implies that if a choice function violates condition (ii), the decision-making process necessarily took place under \comp{}s. What is more, the violation of the condition for an individual pair implies that no rationalizing model would assign a positive transition probability between the pair as we show with the next proposition. 
\begin{proposition}
\label{prop:no-edges}
A stochastic choice function $\bp[\M]$, which is not \btc over a pair $i,j\in\M$, is only rationalizable by \msc[]s with $\q{ij}{}=0$ .

\end{proposition}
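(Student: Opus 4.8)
The plan is to prove the contrapositive: if $\bp[\M]$ is rationalizable by an \msc[] $\langle\mat[]{\M},\init[]{}\rangle$ with $\q{ij}{}>0$, then $\bp[\M]$ is \btc over $(i,j)$. By the rationalizability condition the choice vector is the stationary distribution, so $\bp[\M]\mat[]{\M}=\bp[\M]$; I would introduce the net stationary flow $\phi(k,l):=\p{k}{\M}\q{kl}{}-\p{l}{\M}\q{lk}{}$ along each ordered pair. If $\del[ij]{\M}=0$ there is nothing to show, so assume $\del[ij]{\M}\neq 0$; it then suffices to exhibit a sign-consistent cycle of alternatives through $(i,j)$.

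The first step is a sign lemma: whenever $\q{kl}{}>0$, I claim $\text{sgn}(\del[kl]{\M})=\text{sgn}(\phi(k,l))$. On the binary menu $\set{k,l}$ the two-state chain is automatically reversible, giving $\p[]{k}{\set{k,l}}\q[c]{kl}{k,l}=\p[]{l}{\set{k,l}}\q[c]{lk}{k,l}$, while Assumption~\ref{a-pc} guarantees at least one of $\q[c]{kl}{k,l},\q[c]{lk}{k,l}$ is positive. Plugging the TR-IIA identity $\q[c]{kl}{k,l}\q{lk}{}=\q[c]{lk}{k,l}\q{kl}{}$ (Assumption~\ref{a-TR-IIA}) into $\del[kl]{\M}=\p{k}{\M}\p[]{l}{\set{k,l}}-\p[]{k}{\set{k,l}}\p{l}{\M}$ and clearing denominators should display $\del[kl]{\M}$ as a strictly positive multiple of $\phi(k,l)$ whenever all four relevant transition probabilities are positive. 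The degenerate case $\q{lk}{}=0$ forces $\q[c]{lk}{k,l}=0$ by TR-IIA (using $\q{kl}{}>0$), hence $\q[c]{kl}{k,l}>0$ by Assumption~\ref{a-pc}, hence $\p[]{k}{\set{k,l}}=0$ and $\p[]{l}{\set{k,l}}=1$, so that $\del[kl]{\M}=\p{k}{\M}$ and $\phi(k,l)=\p{k}{\M}\q{kl}{}$ obviously have the same sign; the symmetric obstruction $\q[c]{lk}{k,l}=0$ with $\q{lk}{}>0$ is ruled out by TR-IIA.

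The second step is a circulation lemma. The stationarity identity $\bp[\M]\mat[]{\M}=\bp[\M]$ together with Assumption~\ref{a-consideration} says exactly that total stationary flow into each state equals the flow out, so $\phi$ is a circulation: skew-symmetric with zero net flow across every cut $(S,\M\setminus S)$. I would then show that $\phi(i,j)>0$ implies the existence of a simple directed cycle $i\to j\to k_1\to\cdots\to i$ on which $\phi$ is strictly positive on every edge. Taking $S$ to be the set of states reachable from $j$ through strictly-positive-flow edges, no positive-flow edge leaves $S$, so the net flow into $S$ equals $\sum_{k\notin S,\,l\in S}\phi(k,l)$, in which every term is nonnegative and $\phi(i,j)$ appears; if $i\notin S$ this sum is strictly positive, contradicting zero flow across the cut, so $i\in S$, and a strictly-positive-flow path from $j$ to $i$ closed by the edge $(i,j)$ yields the desired simple cycle.

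Combining these, Step 1 applied to $(i,j)$ gives $\phi(i,j)\neq 0$; if $\phi(i,j)>0$, Step 2 produces a simple cycle through $(i,j)$ with positive flow on each edge, and since every positive-flow edge $(k,l)$ has $\q{kl}{}>0$, Step 1 makes $\del[kl]{\M}>0$ on each edge, so the cycle lies in $\cyrset[+]{\M}$; if $\phi(i,j)<0$, I would apply Step 2 to $(j,i)$ and reverse the resulting cycle, using $\del[lk]{\M}=-\del[kl]{\M}$ to land it in $\cyrset[-]{\M}$ while still passing through $(i,j)$. Either way $\bp[\M]$ is \btc over $(i,j)$, the desired contradiction. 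I expect the hard part to be Step 1 — specifically, verifying that $\del[kl]{\M}$ is a \emph{positive} multiple of the stationary net flow $\phi(k,l)$, not merely sign-aligned with it, and disposing of the boundary cases in which a binary-menu transition probability vanishes so that a binary choice probability degenerates to $0$ or $1$. Step 2 is the classical flow-decomposition argument and should be routine once $\phi$ is recognized as a circulation.
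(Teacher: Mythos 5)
Your proof is correct, but it takes a genuinely different route from the paper's. The paper argues by contradiction and leans on Theorem~\ref{T2}: assuming a rationalizing model with $\q{ij}{}>0$, it builds an auxiliary choice function $\bp[\M]^*$ (resetting the binary probabilities so that $\delta_{kl}(\bp[\M]^*)=0$ exactly on the pairs the model cannot compare) together with a pairwise comparable model rationalizing it, and then derives a contradiction with Theorem~\ref{T2}(ii) because $\bp[\M]^*$ is still not \btc over $(i,j)$. You instead prove the contrapositive directly and constructively: your Step 1 is essentially the paper's Lemma~\ref{lem1} (rationalizability on the binary menu plus Assumptions~\ref{a-pc}--\ref{a-TR-IIA} make $\del[kl]{\M}$ a positive multiple of the stationary net flow $\phi(k,l)=\p{k}{\M}\q{kl}{}-\p{l}{\M}\q{lk}{}$ whenever $\q{kl}{}>0$, with the degenerate binary cases handled exactly as you describe), and your Step 2 recognizes $\phi$ as a circulation and extracts, via the standard cut/flow-decomposition argument, a simple positive-flow cycle through $(i,j)$ (or through $(j,i)$, then reversed using $\del[lk]{\M}=-\del[kl]{\M}$), which Step 1 converts into a cycle in $\cyrset[+]{\M}$ or $\cyrset[-]{\M}$ containing $(i,j)$; note that skew-symmetry of $\phi$ rules out the degenerate two-cycle $\set{(i,j),(j,i)}$, so sign-consistency is never threatened. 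What your approach buys is self-containedness and constructiveness: it avoids any appeal to Theorem~\ref{T2} (and hence to the Stiemke/Gordan machinery behind it) and exhibits the certifying sign-consistent cycle explicitly; indeed the same circulation argument would yield the (i)$\implies$(ii) direction of Theorem~\ref{T2} as a by-product. What the paper's route buys is brevity given the results already established, at the cost of the auxiliary construction of $\bp[\M]^*$ and a second model.
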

	\begin{proof} See Appendix~\ref{proof:no-edges}.
	\end{proof}

We illustrate Theorem~\ref{T2} with the following example.

\begin{example}
Consider the choice function from binary sets given in Table~\ref{tab:example} and let $\bp[\set{i,j,k,l}] = \bp[\all]= (0.25,0.28,0.2,0.27)$. We first verify that $\bp[\all]$ satisfies the condition in Theorem~\ref{T2}(ii). Note that the cycle $\set{(i,k),(k,j),(j,l),(l,i)} \in \cyrset[+]{\all}$. We need to show that the remaining pairs $(i,j)$ and $(k,l)$ belong to a sign-consistent cycle. Since $\set{(i,k),(k,j),(j,i)}\in\cyrset[+]{\all}$ and $\set{(i,k),(k,l),(l,i)}\in\cyrset[+]{\all}$, the condition is satisfied. According to the theorem, the choice function is rationalizable by a fully comparable \msc[]. Appendix~\ref{app-ex:T2} contains an example of one such rationalizing model.  
\label{ex:T2}
\end{example}


\subsection{Reversible and fully comparable MSC models}
		\label{sec:rev-characterization}
The classes of reversible and fully comparable \msc[]s are independent, because in a reversible \msc[] there might be a zero transition probability between some pairs (see Example~\ref{ex:T1}) and fully comparable models need not be reversible (see Example~\ref{ex:T2}). We now focus on the intersection of the two classes of models. Following Theorems~\ref{T1}(iii) and~\ref{T2}(ii), the generated stochastic choice function should be such that $\del[ij]{\M}=0$, or alternatively $\p[]{i}{\set{i,j}}\p{j}{\M}=\p[]{j}{\set{i,j}}\p{i}{\M}$, $\forall i,j\in\M$ and $\M\in \Nset$. This property is known as Independence of irrelevant alternatives (IIA) and implies that enlarging the choice set does not have an effect on relative choice probabilities. We state the characterization result below.
\begin{theorem}
\label{prop:luce}
The following are equivalent: A stochastic choice function $\bp[\M]$ is
	\begin{enumerate}
	\item[(i)] rationalizable by a fully comparable and reversible \msc[] on~$\M$,
	\item[(ii)] positive and independent of irrelevant alternatives,
	\item[(iii)] rationalizable by an \msc[] such that $\forall i,j \in\M$, $\frac{\q{ij}{}}{\q{ji}{}}=\frac{u(j)}{u(i)}$, where $u:\all\rightarrow \mathbb{R}_{++}$.
	\end{enumerate}
\end{theorem}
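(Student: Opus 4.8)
The plan is to prove the cycle of implications $(i)\Rightarrow(ii)\Rightarrow(iii)\Rightarrow(i)$, using Proposition~\ref{prop:limit} (limiting distribution) and the detailed balance equations~(\ref{balance}) as the main tools. For $(i)\Rightarrow(ii)$: let $\langle\mat[]{\M},\init[]{}\rangle$ be a fully comparable and reversible \msc[] generating $\bp[\M]$. Full comparability makes the chain on $\M$ irreducible, and Assumption~\ref{a-consideration} makes it aperiodic, hence it is ergodic; by Proposition~\ref{prop:limit} its unique stationary distribution equals $\bp[\M]$, and the stationary distribution of a finite irreducible chain is strictly positive, which gives positivity. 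For IIA, fix $i,j\in\M$. Since all four quantities are positive, dividing~(\ref{balance}) gives $\q{ij}{}/\q{ji}{}=\p{j}{\M}/\p{i}{\M}$. On the binary menu $\set{i,j}$, Assumption~\ref{a-pc} makes at least one of $q_{ij}(\set{i,j}),q_{ji}(\set{i,j})$ positive, and then TR-IIA (Assumption~\ref{a-TR-IIA}) together with positivity of $\q{ij}{},\q{ji}{}$ forces both to be positive with $q_{ij}(\set{i,j})/q_{ji}(\set{i,j})=\q{ij}{}/\q{ji}{}$; the two-state stationarity equation on $\set{i,j}$ gives $q_{ij}(\set{i,j})/q_{ji}(\set{i,j})=\p{j}{\set{i,j}}/\p{i}{\set{i,j}}$. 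Chaining the three equalities yields $\p{i}{\M}\,\p{j}{\set{i,j}}=\p{j}{\M}\,\p{i}{\set{i,j}}$, i.e.\ $\del[ij]{\M}=0$, which is IIA.

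For $(ii)\Rightarrow(iii)$: first extract a Luce representation by setting $u(i):=\p{i}{\M}$ for $i\in\M$ (extending $u$ to strictly positive values elsewhere if needed); positivity and IIA, together with normalization, give $\p{i}{\M'}=u(i)/\sum_{k\in\M'}u(k)$ on the relevant menus. Then construct the canonical ``Luce'' \msc[]: fix a constant $c$ small enough that $c\sum_{k}u(k)<1$, and for every menu $\M'$ set $q_{ij}(\M'):=c\,u(j)$ for $i\neq j$ and $q_{ii}(\M'):=1-\sum_{j\neq i}q_{ij}(\M')$. The choice of $c$ makes each matrix right stochastic with positive diagonal (Assumption~\ref{a-consideration}); Assumption~\ref{a-pc} is immediate; Assumption~\ref{a-TR-IIA} holds because $q_{ij}(\M')$ does not depend on the menu, so the two symmetric products in TR-IIA coincide. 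Moreover $\q{ij}{}/\q{ji}{}=u(j)/u(i)$, and a direct substitution using the prolonged-consideration identity shows $(u(i))_{i\in\M'}$ solves the stationarity equations of the (irreducible, aperiodic) chain on each $\M'$, so by Proposition~\ref{prop:limit} the generated choice function is the Luce choice, which equals $\bp$. This proves $(iii)$.

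For $(iii)\Rightarrow(i)$: if $\bp[\M]$ is generated by an \msc[] with $\q{ij}{}/\q{ji}{}=u(j)/u(i)$ for all $i,j\in\M$, then every off-diagonal entry of $\mat[]{\M}$ on $\M$ is positive, so the model is fully comparable on $\M$ and the chain is ergodic (Assumption~\ref{a-consideration}); the vector proportional to $(u(i))_{i\in\M}$ satisfies the detailed balance equations~(\ref{balance}) and, equivalently via the ratio condition and the prolonged-consideration identity, is stationary, hence it is the unique stationary (therefore limiting) distribution and witnesses reversibility on $\M$. This gives $(i)$ and closes the loop.

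I expect the only delicate step to be $(i)\Rightarrow(ii)$, where one must combine detailed balance on $\M$, TR-IIA, and two-state stationarity on $\set{i,j}$, and in particular rule out (via TR-IIA and full comparability) that a binary transition probability vanishes; everything else is a direct construction plus invocations of Proposition~\ref{prop:limit}. As an alternative, $(ii)\Rightarrow(i)$ can also be derived from Theorems~\ref{T1} and~\ref{T2}: IIA makes every cycle sign-consistent with sign $0$, so $\bp[\M]$ satisfies Theorem~\ref{T1}(iii) and hence every rationalizing model is reversible, while positivity together with the (vacuously satisfied) bounded-in-a-cycle condition puts $\bp[\M]$ in Theorem~\ref{T2}(ii), yielding a fully comparable rationalizing model --- which is then both --- although the explicit ratio formula in $(iii)$ still requires the construction above.
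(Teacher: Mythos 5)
Your argument is correct, and its key mechanisms coincide with the paper's: detailed balance combined with TR-IIA (Assumption~\ref{a-TR-IIA}) yields IIA, and the correspondence between the utility-ratio condition and detailed balance drives the link between (i) and (iii). The organization differs, though. The paper proves two separate equivalences, (i)$\Leftrightarrow$(ii) and (i)$\Leftrightarrow$(iii); its (ii)$\Rightarrow$(i) step leans on Lemma~\ref{lem3} (equivalently Theorem~\ref{T2}) to obtain a fully comparable rationalizing model and then verifies detailed balance for it, while its (i)$\Rightarrow$(iii) step invokes \citet{Luce1959}'s representation theorem to produce $u$ and substitutes it into the balance equations. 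You instead close a single cycle (i)$\Rightarrow$(ii)$\Rightarrow$(iii)$\Rightarrow$(i), and in the middle step you bypass both Luce's theorem and the general rationalizability lemma by setting $u(i)=p(i,M)$ and writing down the explicit menu-independent kernel $q_{ij}(M')=c\,u(j)$, for which stationarity, Assumptions~\ref{a-consideration}--\ref{a-TR-IIA}, the ratio formula, and reversibility are immediate computations; this construction is close in spirit to the kernel $q_{ij}=\kappa\,p(j,\{i,j\})$ that the paper uses inside Theorem~\ref{T2}, but your version is more self-contained and delivers (iii) directly rather than through the Luce representation. What the paper's route buys is economy given results already established (Lemma~\ref{lem3}, Theorem~\ref{T2}, Luce's theorem); what yours buys is an elementary, fully explicit rationalizing model and a proof that does not depend on those external results. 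One further point worth noting, which the paper leaves implicit and you handle correctly: in (iii)$\Rightarrow$(i) the requirement that $q_{ij}(M)/q_{ji}(M)$ equal a strictly positive number already forces all off-diagonal entries on $M$ to be positive, so full comparability is automatic rather than an extra hypothesis.
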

\begin{proof}
	See Appendix~\ref{proof:luce}.
\end{proof}
We gain several important insights from Theorem~\ref{prop:luce}. First, we learn that the well-known Luce model, also referred to as multinomial logit, is observationally equivalent to a reversible and fully comparable \msc[]. Recall that a stochastic choice function~$\bp[\Nset]$ is a Luce rule if there exists a function ${u:\all\rightarrow \mathbb{R}_{++}}$ such that for all $\M \in \Nset$ and $i\in \M$
\[\p{i}{\M}=\frac{\val{i}}{\sum_{j\in \M} \val{j}}. \]
In his seminal work, \citet{Luce1959} shows that positivity and IIA characterize the Luce model. Therefore, Theorem~\ref{prop:luce} provides a procedural justification for the Luce model. Basically, stochastic choice consistent with the Luce model arises whenever the search dynamics follow a Markov process with symmetric average transitions and positive transition probabilities between all pairs of alternatives.

Therefore, choices consistent with reversible and fully comparable models can be considered most unbiased, as all pairwise comparisons are possible, determined solely by utility, and unaffected by the presentation of the choice problem. Moreover, if the choice data satisfies Theorem~\ref{T1}(iii), but violates IIA, we can deduce that the decision making process is reversible but the agent is unable to make all pairwise comparisons.

Finally, Theorem~\ref{prop:luce}(iii) provides an interesting insight in the exploration mechanism that corresponds to a reversible and fully comparable \msc[]. Specifically, the ratio of transition probabilities between each pair of alternatives equals the ratio of their utilities, where the utility function is the same as in the Luce model. This feature suggests that transitions between alternatives depend solely on the utility, not on factors like salience, positioning, or cognitive restrictions. In contrast, violations of IIA and positivity indicate that these factors do influence decisions, making the agent susceptible to choice manipulation through the presentation of the decision problem. 

\section{Initial fixation}
\label{sec:fixation}
Apart from comparability restrictions, the presentation of the decision problem affects the exploration and evaluation process through the first fixation, captured by the initial distribution $\init[]{}$. It is evident by the form of the generated choice function given in~\eqref{msc-b} that our model predicts that the initial fixation can have significant and long-lasting effects on the final choices when there is time pressure, which is also in line with existing experimental evidence \citep{Reutskaja2011, Armel2008, Atalay2012}. Thus, a choice problem designer could steer the choices of such decision makers towards a target alternative by influencing the initial fixation probability.\footnote{In order to ensure the effectiveness of such an intervention our model shows that it is not only necessary that initial fixation probability of the target increases, but also that the fixation probabilities of all other competitors weakly decrease. A detailed proof is available upon request.} Without time constraints, the effectiveness of such interventions is more nuanced and depends on the pairwise comparability.

Following Proposition~\ref{prop:limit}, when the pairwise evaluation process can eventually reach all alternatives in the menu from any starting point, with no limit on the number of transitions (indicating an irreducible Markov process\footnote{A fully comparable \msc[] is an example of an irreducible model, whereas the model depicted in Figure~\ref{fig:T1} is reducible -- if the process starts at alternative~$k$, none of the other alternatives in the menu can be reached and vice versa.}), final choices become independent of the initial fixation distribution, \ie the choice function is robust to any initial fixation distributions. 

We now characterize irreducible \msc[]s. In order to do so, we need to relax appropriately the property in Theorem~\ref{T2}(ii) since pairwise comparable models are a subclass of irreducible \msc[]s.

\begin{theorem}
\label{T3} 
The following are equivalent: A stochastic choice function \bp[\M] is 
	\begin{enumerate}
        \item[(i)] rationalizable by an irreducible \msc[] on $\M$,
        \item[(ii)] robust to any initial fixation probability distributions,
		\item[(iii)] \btc over all pairs of some cycle $\cya[\M]$.


		
	\end{enumerate}
\end{theorem}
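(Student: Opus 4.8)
The plan is to prove (i) $\Leftrightarrow$ (ii) directly from Proposition~\ref{prop:limit} and standard facts about finite Markov chains, and to obtain (i) $\Leftrightarrow$ (iii) through the single identity $\text{sgn}(\del[ij]{\M})=\text{sgn}(F_{ij})$, where $F_{ij}:=\p{i}{\M}\q{ij}{}-\p{j}{\M}\q{ji}{}$ is the net probability flow on the pair $\{i,j\}$ under the chain on $\M$; condition (iii) is read as saying that the pairs over which $\bp[\M]$ is \btc span and connect $\M$ (so that one can tour all of $\M$ stepping only along such pairs), a natural weakening of the ``\btc over all pairs'' condition of Theorem~\ref{T2}. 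For (i) $\Leftrightarrow$ (ii): if $\bp[\M]$ is rationalized by an irreducible \msc[], the chain on $\M$ is finite, irreducible, and, by Assumption~\ref{a-consideration}, aperiodic, hence has a unique stationary distribution, so by Proposition~\ref{prop:limit} the generated choice function equals it for \emph{every} initial fixation, which is (ii). Conversely, if $\bp[\M]$ is robust to all initial fixations, the rationalizing chain cannot have two closed communicating classes, since concentrating the initial mass in one versus the other would, by Proposition~\ref{prop:limit}, return the two distinct stationary distributions supported on them; since transient states carry zero limiting weight while $\bp[\M]$ is positive, there are no transient states, so the chain has a single communicating class equal to $\M$, i.e.\ is irreducible, giving (i).

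For (i) $\Rightarrow$ (iii), I first record the key identity. By Assumptions~\ref{a-pc}--\ref{a-TR-IIA} and the two-state stationary formula, any pair $i,j$ with $\q{ij}{}>0$ or $\q{ji}{}>0$ satisfies $\q{ij}{}/\q{ji}{}=\q[c]{ij}{i,j}/\q[c]{ji}{i,j}=\p[]{j}{\set{i,j}}/\p[]{i}{\set{i,j}}$; substituting into the definition of $\del[ij]{\M}$ yields $\text{sgn}(\del[ij]{\M})=\text{sgn}(F_{ij})$. Stationarity of $\bp[\M]$ makes $F$ a circulation, so its support, oriented by the sign of $F$, decomposes into directed cycles, each of which has all of its $\del$'s of one strict sign, i.e.\ is a cycle in $\cyrset[+]{\M}\cup\cyrset[-]{\M}$. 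Now take an irreducible rationalizing model: its positive-transition digraph $D$ is strongly connected; every edge of $D$ with $\del[ij]{\M}\neq 0$ has $F_{ij}\neq 0$ (by the identity and positivity of $\bp[\M]$) and so lies on such a sign-consistent cycle, while edges with $\del[ij]{\M}=0$ make $\bp[\M]$ trivially \btc over that pair. Hence every edge of $D$ is a \btc pair, and strong connectivity of $D$ shows these pairs span and connect $\M$, which is (iii).

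For (iii) $\Rightarrow$ (i), the substantive direction, I would build an irreducible rationalizing $Q(\M)$ by superposing flow-conserving pieces on a spanning subset of the \btc pairs, mirroring the construction behind Theorem~\ref{T2}. Parametrize each active edge $\{i,j\}$ by a weight $w_{ij}>0$ with $\q{ij}{}=w_{ij}\,\p[]{j}{\set{i,j}}$ and $\q{ji}{}=w_{ij}\,\p[]{i}{\set{i,j}}$, so that TR-IIA holds automatically and the induced net flow on $\{i,j\}$ equals $w_{ij}\,\del[ij]{\M}$. Fix a spanning tree of the \btc graph; give its $\del[ij]{\M}=0$ edges arbitrary small weights (zero net flow), and for each $\del[ij]{\M}\neq 0$ tree edge add a small strictly positive circulation around a sign-consistent cycle through that pair, which exists because the pair is \btc. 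Each piece is flow-conserving, so $\bp[\M]$ remains stationary; for small enough weights the diagonal stays strictly positive (Assumption~\ref{a-consideration}); the off-diagonal ratios are $\p[]{j}{\set{i,j}}/\p[]{i}{\set{i,j}}$, realizable by suitable binary matrices; and the tree together with the added cycles renders the positive-transition digraph strongly connected, hence irreducible. Then $\bp[\M]$ is the unique stationary distribution and, by Proposition~\ref{prop:limit}, equals $\lra[]{\init[]{}}{\mat{\M}}$, giving (i).

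The delicate point is exactly the feasibility in (iii) $\Rightarrow$ (i): the elementary circulations, one per $\del[ij]{\M}\neq 0$ pair used, each living on its own sign-consistent cycle, must be assignable strictly positive magnitudes that, together with the detailed-balanced pieces, produce a genuine stochastic matrix (rows summing to one, strictly positive diagonal), keep $\bp[\M]$ stationary, respect TR-IIA, and yield an irreducible chain. This is a solvability statement for a positive linear system in the edge weights whose only potential obstruction — a sign conflict on an edge shared by two of the cycles — is ruled out precisely because each participating cycle is sign-consistent; the remaining bookkeeping concerns degenerate binary choices $\p[]{i}{\set{i,j}}\in\set{0,1}$, which force one-directional transitions that must be inserted consistently with the orientation of the circulation.
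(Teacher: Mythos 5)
Your proposal reaches the theorem's conclusions, but by a genuinely different route for (i)$\Leftrightarrow$(iii). The paper proves (i)$\Rightarrow$(iii) by contradiction, leaning on Proposition~\ref{prop:no-edges}: if a pair of the spanning cycle with $\del[kl]{\M}\neq 0$ were not \btc, every rationalizing model would have to set both transition probabilities between it to zero, contradicting irreducibility. You instead argue directly through the sign identity $\text{sgn}(\del[ij]{\M})=\text{sgn}(\p{i}{\M}\q{ij}{}-\p{j}{\M}\q{ji}{})$ (essentially Lemma~\ref{lem1} in disguise), stationarity making the net flow a circulation, and the conformal decomposition of a circulation into directed cycles, which are precisely cycles in $\cyrset[+]{\M}$. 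For (iii)$\Rightarrow$(i) the paper modifies the binary data so that every non-\btc pair has zero $\delta$, invokes Theorem~\ref{T2} to rationalize the modified function by a pairwise comparable model, and then zeroes out the non-\btc pairs, verifying stationarity of the original $\bp[\M]$ via Lemma~\ref{lem4}; you instead build $\mat{\M}$ explicitly by superposing, in the TR-IIA-respecting parametrization $\q{ij}{}=w_{ij}\p[]{j}{\set{i,j}}$, one positive circulation per sign-consistent cycle plus detailed-balanced weights on the $\del[ij]{\M}=0$ edges. This amounts to an explicit construction of the nonnegative solution of $\mathcal{D}(\M)\sav{}=\boldsymbol{0}$ whose existence the paper obtains abstractly (via Gordan/Stiemke inside Theorems~\ref{T1} and~\ref{T2}); your co-orientation remark for edges shared by two sign-consistent cycles is exactly what makes the superposition consistent. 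The paper's route is shorter because it reuses earlier results; yours is self-contained, constructive, and your reading of condition (iii) as a closed walk covering $\M$ matches the paper's own use of ``cycle'' in its proof and in Example~\ref{ex:T3}.

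Two loose ends deserve mention, though both are also left loose by the paper. First, in your (ii)$\Rightarrow$(i) you assert positivity of $\bp[\M]$ to rule out transient states; robustness alone does not give this (a chain with one closed class plus transient states is also robust to the initial fixation), so the step needs either the bridge through (iii) or the positivity delivered by the very irreducible rationalization you are constructing -- the paper itself simply declares (i)$\Leftrightarrow$(ii) trivial and never spells out the converse. Second, your degenerate cases $\p[]{i}{\set{i,j}}\in\set{0,1}$ (equivalently, zeros of the choice data) are flagged but not resolved; note that the same issue is glossed in the paper's step asserting $\q{ij}{}=\q{ij}{}^{*}>0$ on all pairs of $\cya[\M]$. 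In the nondegenerate case your construction does deliver irreducibility (tree edges with $\del[ij]{\M}=0$ are bidirectional, and each sign-consistent cycle is strongly connected along its orientation), so the substance of your argument is sound and on par with the paper's in rigor.
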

\begin{proof}
	See Appendix~\ref{proof:T3}.
\end{proof}

The reducibility of the decision making process implies the existence of separate communicating classes, or consideration sets. This aligns with an agent's approach of first focusing on a subcategory of alternatives and then selecting an item from that subcategory without exploring the rest. Such partitions of the menu could stem from the decision maker's subjective perception of item attributes, such as taste (sweet vs.\ salty), brand, or ingredients. This behavior is consistent with experimental marketing literature, which identifies heuristics used by decision makers to construct consideration sets by focusing on subsets of items with specific attributes (see \citet{Hauser2014} and references therein).

The partitioning of a menu into subsets can also be influenced externally  by the presentation of the decision problem. For instance, in a menu composed of snacks, a designer can emphasize the division between salty and sweet snacks by placing them on different shelves in a supermarket. On an {e-commerce} platform with a recommendation system, the algorithm might categorize items into distinct subgroups and suggest different sets of related products based on the first item the user views. 

If the decision making process is reducible, \ie the underlying Markov chain has multiple communicating classes, the first fixation determines the subset of reachable alternatives. Thus, choices can be manipulated through the initial distribution, unless the probability to start in each communicating class remains constant. Increasing the likelihood of starting in a particular communicating class boosts the choice probability of all alternatives in that class while preserving their relative choice probabilities within the class. Thus, a social planner can nudge decision makers to choose a target alternative more often by directing attention to any alternative in the same communicating class as the target.

Therefore, inferring the irreducibility property of an \msc[] from the choice function is advantageous in two ways. It helps evaluate the effectiveness of various attention-grabbing methods in increasing the choice probability of a target alternative, and it reveals whether the menu is perceived as fragmented or coherent.

Note that if a stochastic choice function fulfils Theorem~\ref{T3}(iii), this does not necessarily mean that it was generated by an irreducible model, but only that it is consistent with such a model. An analyst can infer the reducibility properties of the true model by observing changes in final choice probabilities resulting from alterations in the initial fixation probability of different communicating classes. In order to do that effectively, the analyst would ideally know which alternatives belong to the same communicating class. As we showed in Proposition~\ref{prop:no-edges}, the stochastic choice function itself provides insights about the possible pairwise comparisons and thus about the composition of the communicating classes. Thus, Theorem~\ref{T3}(ii) and (iii) can complement each other in order to infer the irreducibility property.

If choice data reveals the existence of consideration sets when applying Theorem~\ref{T3}, sellers can infer which products fall into the same consideration set as their own product, thus inferring their direct competitors from consumers' perspective. Furthermore, this insight aids choice problem designers in making better targeted interventions and predicting the effects of a change in the initial fixation probability. We conclude this section with the following example.

\begin{figure}[ht!]
\centering
    \begin{tikzpicture}[->,>=stealth',shorten >=1pt,auto,node distance=4cm]
	\begin{scope}[every node/.style={circle,draw,minimum size = 0.8cm}]
	    \node (i) at (0,0) {$i$};
	    \node (j) at (2.5,0) {$j$};
	    \node (l) at (0,2.5) {$l$};
	    \node (k) at (2.5,2.5) {$k$};
	\end{scope}
	\begin{scope}[]
	    \path [<->] (i) edge node {} (j);
	    \path [<->] (j) edge node {} (k);
	    \path [<->] (k) edge node {} (i);
	    \path [<->] (i) edge node {} (l);
	    \draw (j) to[out=25,in=-25,looseness=4] (j); 
	    \draw (i) to[out=155,in=205,looseness=4] (i); 
	    \draw (k) to[out=25,in=-25,looseness=4] (k); 
		\draw (l) to[out=155,in=205,looseness=4] (l);	
	\end{scope}
	\end{tikzpicture}
    \caption{An irreducible \msc[] rationalizing the choice function given in Example~\ref{ex:T3}.}
    \label{fig:T3}
\end{figure}
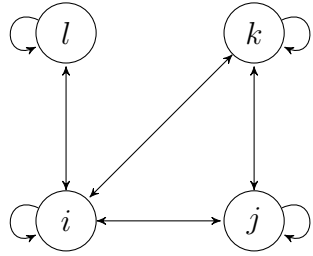
\begin{example}
Consider the choice function from binary sets given in Table~\ref{tab:example}. Let the choice function from the menu $\M=\{i,j,k,l\}$ be $\bp[\M] = (0.24,0.3,0.22,0.24)$. We first verify that $\bp[\M]$ satisfies Theorem~\ref{T3}(iii). Observe that $\del[il]{\M}=0$ and consider the cycle $\cya[\M']=\set{(j,i),(i,k),(k,j)}\in\cyrset[+]{\M}$. Thus, the cycle $\cya[\M] = \set{(j,i),(i,l),(l,i),(i,k),(k,j)}$ fulfils the requirement and the stochastic choice function is rationalizable by an irreducible \msc[] such as the one contained in Appendix~\ref{app-ex:T3} and depicted in Figure~\ref{fig:T3}. 

We can verify that the rationalizing model from Figure~\ref{fig:T3} has the highest number of possible transitions using Proposition~\ref{prop:no-edges}. In order to show that any rationalizing model assigns zero probability to transitions between $j$ and $l$, note first that $\del[jl]{\M}>0$, so we need to check whether the pair belongs to any sign-consistent cycle according to \bp[\M]. We consider the cycles $\{(j,k),(k,l),(l,j)\}$ and $\{(k,l), (l,j),(j,i),(i,k)\}$, because the pair cannot be part of a sign-consistent cycle with $(j,l)$ since $\del[il]{\M}=0$. We can easily see that these are no sign-consistent cycles since $\del[lj]{\M}<0$ and $\del[kl]{\M}>0$. We can use the same argument for the pair $(k,l)$ to prove that the transition probability should be zero.
\label{ex:T3}
\end{example}

\section{Related literature}
\label{sec:literature}
Researchers across various disciplines have recognised the advantages of incorporating Markov processes in modeling choice behavior. The literature emphasizes the tractability of these models, their superior performance over traditional frameworks in predicting choices, and their ability to generalize popular random utility-based discrete choice models, such as the multinomial logit model \citep{Ragain2016,Blanchet2016,Cerreia-Vioglio2018,Baldassi2020}. We relate to this literature by employing a general Markovian model to study the effects of choice architecture and to reveal unobservable pairwise comparison patterns from choice behavior.

Furthermore, our characterization results make a technical contribution related to the problem of inverting the stationary distribution of a Markov chain studied in \citet{Morimura2013} and \citet{Kumar2015}. This existing literature attempts to recover the compete transition probability matrix from the stationary distribution assuming that either the graph or the rate of reaching one state from another is (partially) known. We identify the properties of a stationary distribution that guarantee the existence of a corresponding Markov chain within the classes of reversible, pairwise comparable, and irreducible chains. Notably, we achieve this under minimal assumptions concerning the consistency of Markov chains over subsets of states.

The \msc[] builds on a vast theoretical literature incorporating various behavioral limitations into models of choice behavior, in particular to search, reference dependence, and limited consideration. In much of the theoretical literature featuring sequential menu exploration, the search order is specified by the choice problem itself in a form of a list \citep{Horan2010, Papi2012, Zhang2016, Rubinstein2006} or unobservable network of related products in a recommender system \citep{Masatlioglu2013, Masatlioglu2019}. In some choice models such as those by \citet{Caplin2011}, \citet{Apesteguia2013}, \citet{Dutta2020} and \citet{Demirkan2024} agent's ability to partition and explore the choice set is not observable and this literature is mainly concerned with the problem of rationalizability. Our \msc[] adds to this literature by incorporating many of the empirically documented decision-making patterns motivating the existing models and offering a simple framework to analyse the effects of various presentation formats.  

Reference points have played an important role in behavioral choice models, assumed to be static and exogenous \citep{Tversky1991, Masatlioglu2005, Salant2008}, endogenous and menu-dependent \citep{Ok2015, Tserenjigmid2019}, or stochastic \citep{Kibris2024, Ravid2019}. The \msc[] introduces a dynamic reference point represented by the current alternative, since the probability of transitioning to another option is influenced by the present state in the Markov chain. Alternatively, the initial fixation can be interpreted as a stochastic reference point. Our contribution to this literature lies in highlighting the nuanced effect that reference points have on final choices, depending on the time pressure and the existence of consideration sets. 
%

Limited consideration in decision-making, \ie the idea that agents do not consider all available alternatives, has also been a very influential concept in behavioral choice literature. Consideration sets have been modeled as exogenous \citep{masatlioglu2012,demuynck2018}, random \citep{masatlioglu2012,Manzini2014a, Brady2016}, or as a result of constrained optimization \citep{caplin2018}. We contribute to this literature by identifying necessary and sufficient conditions showing the existence of such consideration sets and their composition despite the fact that the agent makes stochastic choices even if all alternatives are considered. 

Finally, our paper is related to the literature on manipulating attention by different presentations of the menu and its consequences on choice behavior. Notable recent theoretical works addressing this issue are \citet{Gossner2021}, which features a learning mechanism and endogenous stopping, and \citet{Kovach2019}, which enriches the \citet{Luce1959} model by separating the alternatives into two categories depending on their ability to attract attention. Our framework enables us to study in addition the impact of manipulating the comparability between pairs of alternatives on final choices.

\section{Conclusion}
\label{sec:conclusion}
Motivated by key behavioral patterns such as pairwise comparisons, attention-driven transitions, and stochastic decision-making, we introduce the Markov Stochastic Choice (MSC) model as a tool to analyse the effects of item arrangement and attention manipulation on final choices. Understanding and predicting these effects is crucial for choice problem designers interested in influencing or minimizing external effects on choices.

Our findings indicate that decision-makers may differ in their sensitivity to presentation formats. We identify decision-makers following reversible or fully comparable Markov processes as being unaffected by item positioning and show how these properties can be revealed from choice data. Additionally, we provide a behavioral foundation for the Luce model as observationally equivalent to a reversible and fully comparable \msc[] without time constraints. Finally, interventions such as altering initial fixation probabilities can effectively nudge decisions, especially under time constraints or when the decision process is reducible. We are able to determine the existence and composition of consideration sets from the choice function.

The proposed model offers different directions for future research, two of which we outline here. In the current model, time pressure is reflected in the stopping probability of the decision-making process. An alternative way to model time pressure is as a fixed point in time at which the exploration of the choice set is terminated. One could then compare the effect that the modelling of the two approaches to time pressure have on final choices and use that information to make predictions.    

The second direction is to extend the model to multi-attribute alternatives. This can be easily incorporated in the existing framework by letting each state of the Markov chain represent an attribute of an alternative. The multi-attribute version of the \msc[] offers a natural environment to study the other context effects known in the literature, such as the compromise and similarity effects, and contribute to a better understanding of the effect of alternative- vs.\ attribute-based transitions on choices, which have been extensively studied in the experimental literature (see \citet{Noguchi2014}).
\appendix
\cleardoublepage%
\section{Proofs}
\subsection{Proof of Proposition~\ref{prop:limit}}
\label{proof:limit}
We begin the proof by simplifying Equation~\eqref{msc-b}.\footnote{An alternative way of defining an \msc[] is an absorbing Markov chain with one one transient and one absorbing state corresponding to each alternative in the menu and transition probability matrix between transient states given by $\mat[*]{\M}=(1-\abs)\mat[]{\M}$ and absorption probability matrix $A(\M)=\abs I$. The stationary distribution of this Markov chain is given by ${\bra[]{\init[]{*}}{\mat[*]{\M}}=\init[]{*}(I-\mat[*]{\M})^{-1}A(\M)}$ \citep{abs_chains}. Thus, Lemma~\ref{def} is a corollary of existing results from the literature on absorbing Markov chains. Nevertheless, the proof is provided here for convenience.} 
\vspace{1pt}
\begin{lemma} 
\label{def}
The generated stochastic choice function in~\eqref{msc-b} is equivalent to
 	\begin{equation}
	\bra[]{\init[]{}}{\mat{\M}} = \abs \init[]{} (I-(1-\abs)\mat[]{\M})^{-1}, \quad \forall \M\in\Nset.
	\label{eq-def}
	\end{equation}
 \end{lemma}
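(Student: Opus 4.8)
The plan is to prove the identity $\bra[]{\init[]{}}{\mat{\M}} = \abs \init[]{} (I-(1-\abs)\mat[]{\M})^{-1}$ by recognizing the sum in~\eqref{msc-b} as a matrix geometric series. First I would observe that each row of $\mat[]{\M}$ sums to $1$ by Assumption~\ref{a-consideration}, so $\mat[]{\M}$ is a right stochastic matrix and its spectral radius equals $1$; consequently, since $\abs\in(0,1)$, the matrix $(1-\abs)\mat[]{\M}$ has spectral radius $1-\abs<1$. This is the key technical fact that makes everything go through.

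The main step is then to invoke the Neumann series: for any square matrix $A$ with spectral radius strictly less than $1$, the series $\sum_{\period=0}^\infty A^\period$ converges and equals $(I-A)^{-1}$. Applying this with $A=(1-\abs)\mat[]{\M}$ gives $\sum_{\period=0}^\infty (1-\abs)^\period \mat[]{\M}^\period = (I-(1-\abs)\mat[]{\M})^{-1}$. Then I would simply factor the scalars $\abs$ and $\init[]{}$ out of the sum in~\eqref{msc-b}:
\[
\bra[]{\init[]{}}{\mat{\M}} = \sum_{\period=0}^\infty \abs\init[]{}(1-\abs)^{\period} \mat[]{\M}^{\period} = \abs\init[]{}\sum_{\period=0}^\infty (1-\abs)^{\period} \mat[]{\M}^{\period} = \abs\init[]{}\bigl(I-(1-\abs)\mat[]{\M}\bigr)^{-1},
\]
which is exactly~\eqref{eq-def}. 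One should also note that $I-(1-\abs)\mat[]{\M}$ is indeed invertible: since $(1-\abs)\mat[]{\M}$ has all eigenvalues of modulus at most $1-\abs<1$, the value $1$ is not an eigenvalue of $(1-\abs)\mat[]{\M}$, so $I-(1-\abs)\mat[]{\M}$ is nonsingular.

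There is essentially no serious obstacle here — the statement is a standard fact about absorbing Markov chains, as the footnote in the excerpt already acknowledges. The only point requiring a modicum of care is justifying convergence of the matrix series and the validity of rearranging/factoring, both of which follow from the spectral radius bound; a reader wanting full rigor could alternatively argue entrywise, noting that $\mat[]{\M}^\period$ is stochastic so its entries are bounded in $[0,1]$, whence $\sum_\period (1-\abs)^\period \mat[]{\M}^\period$ is dominated entrywise by the convergent geometric series $\sum_\period (1-\abs)^\period$. Either route closes the proof immediately.
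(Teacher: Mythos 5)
Your proof is correct and follows essentially the same route as the paper's: both reduce the sum to a Neumann series for $(1-\abs)\mat[]{\M}$ and justify convergence by noting this matrix is a contraction (the paper bounds the induced $\infty$-norm by $1-\abs$, you bound the spectral radius, which is equivalent in effect). Nothing further is needed.
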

\begin{proof}
	Let $\tmat{\M} = (1-\abs)\mat[]{\M}$. Then, $
	\bra[]{\init[]{}}{\mat{\M}} =\abs \init[]{} \sum_{\period=0}^\infty \tmat{\M}^{\period}.$
	In order to simplify the expression, we need to show that the Neumann series $\sum_{\period=0}^\infty \tmat{\M}^{\period}$ converges in the operator norm in order to use that $\sum_{\period=0}^\infty \tmat{\M}^{\period}=(I-\tmat{\M})^{-1}$ (see for example \citet[Theorem 2.14]{Kress2014}). This is the case if $\tmat{\M}$ is smaller than unity in some norm. Since $\mat[]{\M}$ is a stochastic matrix $||\mat[]{\M}||_{\infty} = \max_i \sum_j|\q{ij}{\M}|=1$ and $||\tmat{\M}||_{\infty} = 1-\abs <1$. Hence, ${\bra[]{\init[]{}}{\mat{\M}} = \abs \init[]{} (I-\tmat{\M})^{-1}}$.
\end{proof}

Fix a menu $\M\in\Mset$. We reformulate Equation~\eqref{eq-def} in the following way:
\begin{align*}
\bra[]{\init[]{}}{\mat[]{\M}}
&=\frac{\abs}{1-\abs} \init[]{} \frac{1}{\text{det}\left(\frac{1}{1-\abs}I-\mat[]{\M}\right)}\text{adj}\left(\frac{1}{1-\abs}I-\mat[]{\M}\right),
\end{align*}
where adj(.) denotes the adjugate matrix. Since the Markov process is ergodic, $\mat[]{\M}$ has an eigenvalue of 1. We express the determinant with the characteristic polynomial:
\begin{align*}
\text{det}\left(\frac{1}{1-\abs}I-\mat[]{\M}\right)&=\prod_{l=1}^L \left(\frac{1}{1-\abs} -\lambda_l\right)=\frac{\abs}{1-\abs}\prod_{l=2}^L \left(\frac{1}{1-\abs} -\lambda_l\right),
\end{align*}
where $\lambda_l$ denotes the eigenvalues of $\mat[]{\M}$. Since $\lambda_l< 1$ for $l\neq 1$, the determinant is positive. We plug in the expression and simplify the resulting choice function
\begin{align*}
\bra[]{\init[]{}}{\mat[]{\M}} &= \init[]{} \frac{1}{\prod_{l=2}^L \left(\frac{1}{1-\abs} -\lambda_l\right)}\text{adj}\left(\frac{1}{1-\abs}I-\mat[]{\M}\right).
\end{align*}
Now we can let $\abs\rightarrow 0$:
\begin{align*}
\lra[]{\init[]{}}{\mat[]{\M}}=\lim_{\abs\rightarrow 0}\bra[]{\init[]{}}{\mat[]{\M}}&= \init[]{} \frac{1}{\prod_{l=2}^L (1 -\lambda_l)}\text{adj}(I-\mat[]{\M}).
\end{align*}
Multiply both sides of the equation with $(I-\mat[]{\M})$ from the right and simplify:
\begin{align*}
\lra[]{\init[]{}}{\mat[]{\M}}(I-\mat[]{\M})
&= \init[]{} \frac{1}{\prod_{l=2}^L (1 -\lambda_l)}\text{det}(I-\mat[]{\M}).
\end{align*}
Since $\text{det}(I-\mat[]{\M})=0$, we have $\brho[]{}(I-\mat[]{\M})=\boldsymbol{0}$. 
\subsection{Auxiliary results}
\label{sec:aux}
This section contains three key lemmas for proving the main theorems.
\label{proof:auxiliary}

\begin{lemma} 
\label{lem1}
Let $\bp[\Nset]$ be a stochastic choice function rationalizable by an \msc[]. For all $\M\in\Nset$ and $i,j\in\M$ for which $\p[]{i}{\set{i,j}}\in (0,1)$ it holds that $$\frac{\del[ij]{\M}\q{ij}{}}{\p[]{j}{\set{i,j}}}=-\frac{\del[ji]{\M}\q{ji}{}}{\p[]{i}{\set{i,j}}}.$$
\end{lemma}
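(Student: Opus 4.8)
The plan is to work out the model's behaviour on the binary menu $\set{i,j}$, carry that information over to the larger menu $\M$ by means of TR-IIA, and then obtain the claimed identity by a one-line substitution, after which both sides will be seen to equal the net stationary flow $\p{i}{\M}\q{ij}{}-\p{j}{\M}\q{ji}{}$ from $i$ to $j$.

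First I would establish detailed balance on the binary menu. Since rationalizability requires $\bp[\M](I-\mat{\M})=\boldsymbol{0}$ for every menu, the vector $\bp[\set{i,j}]$ is a stationary vector of the two-state chain $\mat{\set{i,j}}$; its stationarity equation at state $i$, combined with $\q[c]{ii}{i,j}=1-\q[c]{ij}{i,j}$ (Assumption~\ref{a-consideration}), reduces to
\[
\p[]{i}{\set{i,j}}\,\q[c]{ij}{i,j}=\p[]{j}{\set{i,j}}\,\q[c]{ji}{i,j},
\]
which I call $(\dagger)$. Under the hypothesis $\p[]{i}{\set{i,j}}\in(0,1)$ both binary choice probabilities are strictly positive, so $(\dagger)$ forces $\q[c]{ij}{i,j}=0$ if and only if $\q[c]{ji}{i,j}=0$; Assumption~\ref{a-pc} excludes that, hence $\q[c]{ij}{i,j}>0$ and $\q[c]{ji}{i,j}>0$. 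I would then transport $(\dagger)$ to $\M$: multiply it by $\q{ij}{}$, use TR-IIA (Assumption~\ref{a-TR-IIA}) to rewrite $\q[c]{ji}{i,j}\,\q{ij}{}=\q[c]{ij}{i,j}\,\q{ji}{}$, and cancel the nonzero factor $\q[c]{ij}{i,j}$ to obtain
\[
\p[]{i}{\set{i,j}}\,\q{ij}{}=\p[]{j}{\set{i,j}}\,\q{ji}{},
\]
which I call $(\star)$.

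With $(\star)$ in hand the identity is immediate. Substituting $\del[ij]{\M}=\p{i}{\M}\p[]{j}{\set{i,j}}-\p[]{i}{\set{i,j}}\p{j}{\M}$ into the left-hand side of the lemma and using $(\star)$ in the form $\p[]{i}{\set{i,j}}\q{ij}{}/\p[]{j}{\set{i,j}}=\q{ji}{}$ yields $\p{i}{\M}\q{ij}{}-\p{j}{\M}\q{ji}{}$; expanding $\del[ji]{\M}$ on the right-hand side (note also $\del[ji]{\M}=-\del[ij]{\M}$) and using $(\star)$ in the form $\p[]{j}{\set{i,j}}\q{ji}{}/\p[]{i}{\set{i,j}}=\q{ij}{}$ yields $-\bigl(\p{j}{\M}\q{ji}{}-\p{i}{\M}\q{ij}{}\bigr)$, the same quantity, so the two sides coincide.

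The only step that needs care is the positivity argument: the assumption $\p[]{i}{\set{i,j}}\in(0,1)$ is used both to keep the denominators in the statement well defined and, through $(\dagger)$ together with Assumption~\ref{a-pc}, to license the cancellation of $\q[c]{ij}{i,j}$ in the passage to $(\star)$ — and it is precisely the condition ruling out an absorbing state in $\mat{\set{i,j}}$. Everything else is routine algebra, so I anticipate no further obstacle.
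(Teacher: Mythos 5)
Your proof is correct and follows essentially the same route as the paper: binary-menu stationarity gives $\p[]{i}{\set{i,j}}\q[c]{ij}{i,j}=\p[]{j}{\set{i,j}}\q[c]{ji}{i,j}$, Assumption~\ref{a-pc} plus $\p[]{i}{\set{i,j}}\in(0,1)$ gives positivity of the binary transition probabilities, TR-IIA transports this to the relation $\p[]{i}{\set{i,j}}\q{ij}{}=\p[]{j}{\set{i,j}}\q{ji}{}$, and the identity follows by substitution. The only (harmless) difference is that you derive this relation uniformly by multiplying and cancelling the positive factor $\q[c]{ij}{i,j}$, whereas the paper splits into the trivial case $\q{ij}{}=\q{ji}{}=0$ and the case $\q{ij}{}>0$.
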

\begin{proof} Take an arbitrary pair $i,j\in\M$ with $\p[]{i}{\set{i,j}}\in (0,1)$. Rationalizability of $\bp[\Nset]$ implies $\p[]{i}{\set{i,j}}\q[c]{ij}{i,j}=\p[]{j}{\set{i,j}}\q[c]{ji}{i,j}$. It follows from $\p[]{i}{\set{i,j}}\in (0,1)$ and Assumption~\ref{a-pc} that $\q[c]{ij}{i,j}>0$ and $\q[c]{ji}{i,j}>0$. The statement is trivially satisfied for $\q{ij}{}=\q{ji}{}=0$. If $\q{ij}{}>0$, TR-IIA implies $\q{ji}{}>0$ and $\frac{\q{ij}{}}{\p[]{j}{\set{i,j}}}=\frac{\q{ji}{}}{\p[]{i}{\set{i,j}}}$.
Therefore,
	\begin{align*}
	\frac{\del[ij]{\M}\q{ij}{}}{\p[]{j}{\set{i,j}}}=&\left(\p{i}{\M}\p[]{j}{\set{i,j}}-\p[]{i}{\set{i,j}}\p{j}{\M}\right)\frac{\q{ji}{}}{\p[]{i}{\set{i,j}}},
	\end{align*}
	and the final result follows.
\end{proof} 
\begin{lemma}
\label{lem3}
A stochastic choice function $\bp[\Nset]$ is rationalizable with an \msc[] iff
	\begin{equation}
	\label{eq:diff}
	\sum\limits_{\set{i\in\M:\q{ji}{} >0}}\frac{\del[ji]{\M}\q{ji}{}}{\p[]{i}{\set{i,j}}}-\sum_{\substack{\set{i\in\M:\q{ji}{} =0,\\ \q[c]{ji}{i,j}=0}}}\frac{\del[ij]{\M}\q{ij}{}}{\p[]{j}{\set{i,j}}}=0\quad \forall j \in \M, \forall \M\in\Nset.
	\end{equation}
\end{lemma}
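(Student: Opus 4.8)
The goal is to characterize rationalizability by an MSC model via the balance-type equations~\eqref{eq:diff}. The plan is to start from the defining equation of a rationalizing model, namely $\brho[]{}(I - \mat[]{\M}) = \boldsymbol{0}$, and rewrite it column by column. Fixing $j \in \M$, the $j$-th coordinate of $\brho[]{}(I - \mat[]{\M}) = \boldsymbol{0}$ reads
\begin{equation*}
\lra[j]{}{\mat{\M}}\bigl(1 - \q{jj}{}\bigr) = \sum_{i \neq j} \lra[i]{}{\mat{\M}}\,\q{ij}{}.
\end{equation*}
Using Assumption~\ref{a-consideration} to substitute $1 - \q{jj}{} = \sum_{i \neq j}\q{ji}{}$ on the left, this becomes $\sum_{i \neq j}\bigl(\lra[j]{}{\mat{\M}}\q{ji}{} - \lra[i]{}{\mat{\M}}\q{ij}{}\bigr) = 0$, i.e.\ a sum of the pairwise ``net flows'' $\lra[j]{}{\mat{\M}}\q{ji}{} - \lra[i]{}{\mat{\M}}\q{ij}{}$ over all $i \in \M \setminus \{j\}$. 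So the whole content of rationalizability is: for every $\M$ and every $j \in \M$, these net flows sum to zero, AND the model satisfies Assumptions~1--3. The remaining work is to express each net flow term in terms of the observable choice probabilities and the transition probabilities, and to see that the restrictions imposed by Assumptions~2--3 reduce to exactly the two sums in~\eqref{eq:diff}.

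\textbf{Key steps.} First I would split the index set $\{i \in \M : i \neq j\}$ into three groups according to the transition structure on the binary set $\{i,j\}$ and on $\M$. For $i$ with $\q{ji}{} > 0$: by TR-IIA (Assumption~\ref{a-TR-IIA}) and pairwise comparability on binary sets (Assumption~\ref{a-pc}) one gets $\p[]{i}{\set{i,j}} \in (0,1)$, so Lemma~\ref{lem1} applies; I would use it, together with the identity $\lra[j]{}{\mat{\M}}\q{ji}{} - \lra[i]{}{\mat{\M}}\q{ij}{} = -\del[ji]{\M}\q{ji}{}/\p[]{i}{\set{i,j}}$ obtained by expanding $\del[ji]{\M} = \p{j}{\M}\p[]{i}{\set{i,j}} - \p[]{j}{\set{i,j}}\p{i}{\M}$ and using the proportionality $\q{ji}{}/\p[]{i}{\set{i,j}} = \q{ij}{}/\p[]{j}{\set{i,j}}$ from TR-IIA. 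This yields the first sum in~\eqref{eq:diff}. Second, for $i$ with $\q{ji}{} = 0$ but $\q{ij}{} > 0$ (equivalently $\q[c]{ji}{i,j} = 0$, so by Assumption~\ref{a-pc} $\q[c]{ij}{i,j} > 0$ and $\p[]{i}{\set{i,j}} \in (0,1)$): here the net flow reduces to $-\lra[i]{}{\mat{\M}}\q{ij}{}$, and I would again expand $\del[ij]{\M}$ and use TR-IIA to rewrite this as $-\del[ij]{\M}\q{ij}{}/\p[]{j}{\set{i,j}}$ — actually here, since $\q{ji}{}=0$, one needs to be slightly careful; I expect to use $\lra[i]{}{\mat{\M}}\q{ij}{} = \del[ij]{\M}\q{ij}{}/\p[]{j}{\set{i,j}}$ which follows because $\p[]{i}{\set{i,j}}\p{j}{\M}$ drops out (it is multiplied by $\q{ji}{} = 0$ in the companion term). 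This gives the second sum. Third, for $i$ with both $\q{ji}{} = 0$ and $\q{ij}{} = 0$ (compatible with Assumption~\ref{a-pc} only if one of $\q[c]{ij}{i,j}, \q[c]{ji}{i,j}$ is positive, which is fine since binary transitions need not vanish even when menu transitions do): the net flow is identically zero and contributes nothing. Summing the three groups reproduces exactly the left-hand side of~\eqref{eq:diff}, establishing the ``only if'' direction.

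\textbf{The ``if'' direction and the main obstacle.} For the converse, given a positive-normalized vector $\bp[\M]$ satisfying~\eqref{eq:diff} for all $j, \M$, I would need to construct transition matrices $\mat[]{\M}$ satisfying Assumptions~1--3 such that $\bp[\M] = \lra[]{\init[]{}}{\mat{\M}}$ for all $\M$. The natural construction is to first fix the binary-set transition probabilities compatibly with the binary choice data (possible by Assumptions~\ref{a-pc}), then define $\q{ij}{} := c_{\M}\,\q[c]{ij}{i,j}$ for a suitable menu-dependent scaling $c_{\M} > 0$ chosen small enough that $\mat[]{\M}$ is substochastic with positive diagonal (Assumption~\ref{a-consideration}); TR-IIA (Assumption~\ref{a-TR-IIA}) then holds automatically by the proportional construction, and pairwise comparability is inherited. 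The delicate point — and what I expect to be the main obstacle — is verifying that $\bp[\M]$ is actually the stationary distribution of this constructed chain, i.e.\ that $\bp[\M](I - \mat[]{\M}) = \boldsymbol{0}$: this is precisely the system whose $j$-th equation, after the same manipulations as in the forward direction, is~\eqref{eq:diff}, so the algebra should close — but one must check that the scaling $c_{\M}$ can be chosen \emph{uniformly} so that all of Assumptions~1--3 hold simultaneously and that, when the chain is reducible, the stationary distribution is still $\bp[\M]$ for an appropriate choice of $\init[]{}$ (choosing $\init[]{}$ to put the right mass on each communicating class). Handling the reducible case and the freedom in $\init[]{}$ carefully, while keeping the equivalence tight, is the part that requires the most care; the pure linear-algebra equivalence between~\eqref{eq:diff} and $\brho[]{}(I-\mat[]{\M}) = \boldsymbol{0}$ is essentially bookkeeping once Lemma~\ref{lem1} and TR-IIA are in hand.
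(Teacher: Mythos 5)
Your forward direction is, in substance, the paper's own proof: the paper likewise starts from the row equation $\sum_{i\neq j}\p{j}{\M}\q{ji}{}-\sum_{i\neq j}\p{i}{\M}\q{ij}{}=0$, splits the indices according to whether $\q{ji}{}>0$, and uses TR-IIA together with rationalizability on binary sets to turn each net flow into a $\del[ji]{\M}$ term. Two small slips in your sketch. First, the net flow in your first group equals $+\del[ji]{\M}\q{ji}{}/\p[]{i}{\set{i,j}}$, not its negative (expand $\del[ji]{\M}$ and use $\q{ij}{}/\q{ji}{}=\p[]{j}{\set{i,j}}/\p[]{i}{\set{i,j}}$); with your sign you would land on ``first sum \emph{plus} second sum $=0$'' rather than the difference in \eqref{eq:diff}. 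Second, in your group with $\q{ji}{}=0$ and $\q{ij}{}>0$, the implied $\q[c]{ji}{i,j}=0$ together with binary rationalizability forces $\p[]{i}{\set{i,j}}=0$ and $\p[]{j}{\set{i,j}}=1$, not $\p[]{i}{\set{i,j}}\in(0,1)$; it is precisely this degeneracy that gives $\del[ij]{\M}=\p{i}{\M}$ and lets the second sum collapse to $\sum\p{i}{\M}\q{ij}{}$. Neither slip derails the argument once the algebra is carried out correctly.

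The gap is in how you read the ``if'' direction. Equation \eqref{eq:diff} is a joint condition on $\bp[\Nset]$ and the transition probabilities of a \emph{given} candidate model satisfying Assumptions~\ref{a-consideration}--\ref{a-TR-IIA}; every step in the forward manipulation is an equivalence given those assumptions and the binary-set equations, so the converse is obtained simply by reading the chain backwards to recover $\bp[\M](I-\mat[]{\M})=\boldsymbol{0}$ for all $\M$. No chain has to be constructed from scratch, no scaling constant $c_{\M}$ has to be chosen, and the reducible case and the choice of $\init[]{}$ play no role at this stage. The constructive step you describe is genuinely needed later, but it belongs to the proofs of Theorems~\ref{T1}--\ref{T3}, where a full matrix is assembled from a nonnegative solution $\sav{}$ of the induced linear system via Equation \eqref{qij}; note that that construction scales each \emph{pair} separately, whereas your single menu-wide factor $c_{\M}$ multiplying the binary transition probabilities would not provide enough degrees of freedom to match a general rationalizable $\bp[\M]$. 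As written, your converse is both harder than the lemma requires and left unfinished.
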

\begin{proof}
	The rationalizability of $\bp[\Nset]$ implies that $\bp[\M](I-\mat[]{\M}) = 0$ for all $\M\in\Nset$. Therefore, it holds for all $j \in \M$ and $\M\in\Nset$ that
		\begin{equation*}
		\begin{gathered}
		\sum_{i\neq j}\p{j}{\M}\q{ji}{}-\sum_{i\neq j}\p{i}{\M}\q{ij}{}=0.
		\end{gathered}
		\end{equation*}
		We now rearrange the terms depending on whether $\q{ji}{} >0$ or not.
		\begin{equation*}
		\begin{gathered}
		\sum\limits_{\set{i\in\M\setminus\set{j}:\q{ji}{} >0}}\left(\p{j}{\M}-\p{i}{\M}\frac{\q{ij}{}}{\q{ji}{}}\right)\q{ji}{}-\sum\limits_{\substack{\set{i\in\M\setminus\set{j}:\\ \q{ji}{} =0}}}\p{i}{\M}\q{ij}{}=0
		\end{gathered}
		\end{equation*}
		It follows from Assumptions~\ref{a-pc} and~\ref{a-TR-IIA} that if $\q{ji}{}\neq 0$, then $\q[c]{ji}{i,j}\neq 0$. Rationalizability and Assumption~\ref{a-pc} imply that $\p[]{i}{\set{i,j}}\neq 0$.
		We apply Assumption~\ref{a-TR-IIA} and the rationalizability of the choice function from binary sets to obtain:
		\begin{equation*}
		\begin{gathered}
		\sum\limits_{\substack{\set{i\in\M\setminus\set{j}:\\ \q{ji}{} >0}}}\left(\p{j}{\M}-\p{i}{\M}\frac{\p[]{j}{\set{i,j}}}{\p[]{i}{\set{i,j}}}\right)\q{ji}{}-\sum\limits_{\substack{\set{i\in\M\setminus\set{j}:\\ \q{ji}{} =0}}}\p{i}{\M}\q{ij}{}=0,
		\end{gathered}
		\end{equation*}
	Note that TR-IIA and $\q{ji}{}= 0$ imply that either $\q{ij}{}= 0$ or $\q[c]{ji}{i,j}= 0$. In the latter case rationalizability guarantees that $\p{i}{\set{i,j}}= 0$. Hence, for those pairs holds $\del[ij]{\M}=\p{i}{\M}\p[]{j}{\set{i,j}} -\p[]{i}{\set{i,j}}\p{j}{\M}= \p{i}{\M}$ and the final result follows.
		\end{proof}

Let $\G=\set{(i,j)\in\M^2:\del[ij]{\M}>0} = \set{g_b^{\M}}_{b\in\set{1,\dots,|\G|}}$ be an indexed set of ordered pairs. The menu is also denoted as an indexed set $\M=\set{m_a:a\in\set{1,\dots,|\M|}}$. 
%
\begin{lemma}
\label{lem4}
A stochastic choice function $\bp[\Nset]$ is rationalizable with an \msc[] iff 
\begin{equation}
	\label{eq:diff-system}
	\mathcal{D}(\M)\sav{}=\boldsymbol{0},\quad \forall \M\in\Nset.
	\end{equation}
	where $\sav$ is a vector of transition probabilities between the pairs in $\G$, \ie $\gamma_{b}(\M)=\q{ij}{}$ iff $g_{b}^M=(i,j)$ and $\mathcal{D}(\M)$ is a matrix with dimensions $|\M|\times|\G|$ and elements
 \begin{align*}
\mathrm{d}_{a,b}(\M)=\begin{cases}
\del[ij]{\M}/\p[]{j}{\set{i,j}}, &\text{ if } m_{a}=i,g_b^M=(i,j)\\
-\del[ij]{\M}/\p[]{j}{\set{i,j}}, &\text{ if } m_{a}=j,g_b^M=(i,j),\\
0 &else.
\end{cases}
\end{align*}
\end{lemma}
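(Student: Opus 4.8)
The claim is that rationalizability of $\bp[\Nset]$ by an \msc[] is equivalent to the linear system $\mathcal{D}(\M)\sav{}=\boldsymbol{0}$ holding for every menu $\M$. The natural route is to start from Lemma~\ref{lem3}, which already packages rationalizability as a system of $|\M|$ scalar equations indexed by $j\in\M$, and simply re-index everything in terms of the edge set $\G=\set{(i,j):\del[ij]{\M}>0}$ and the vector $\sav{}$ of transition probabilities on those edges. So the proof is essentially a bookkeeping argument showing that row $a$ of $\mathcal{D}(\M)\sav{}=\boldsymbol{0}$ is exactly the equation of Lemma~\ref{lem3} for $j=m_a$.

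\textbf{Key steps, in order.} First I would fix $\M\in\Nset$ and $j=m_a\in\M$, and write out the $a$-th coordinate of $\mathcal{D}(\M)\sav{}$ as $\sum_{b}\mathrm{d}_{a,b}(\M)\gamma_b(\M)$. By definition of $\mathrm{d}_{a,b}$, the only nonzero contributions come from edges $g_b^M=(i,j)$ incident to $j$: those with $j$ as the \emph{source} (so $g_b^M=(j,i)$ for some $i$, contributing $+\del[ji]{\M}/\p[]{i}{\set{i,j}}\cdot\q{ji}{}$) and those with $j$ as the \emph{target} (so $g_b^M=(i,j)$, contributing $-\del[ij]{\M}/\p[]{j}{\set{i,j}}\cdot\q{ij}{}$). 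Hence the $a$-th equation reads
\[
\sum_{\set{i:(j,i)\in\G}}\frac{\del[ji]{\M}\q{ji}{}}{\p[]{i}{\set{i,j}}}-\sum_{\set{i:(i,j)\in\G}}\frac{\del[ij]{\M}\q{ij}{}}{\p[]{j}{\set{i,j}}}=0.
\]
Second, I would reconcile this with Lemma~\ref{lem3}. In Lemma~\ref{lem3} the first sum runs over $\set{i:\q{ji}{}>0}$ and the second over $\set{i:\q{ji}{}=0,\ \q[c]{ji}{i,j}=0}$. The reconciliation uses two observations: (a) an edge $(j,i)$ with $\q{ji}{}>0$ has $\del[ji]{\M}>0$ precisely when the term $\del[ji]{\M}\q{ji}{}/\p[]{i}{\set{i,j}}$ is nonzero, and by Lemma~\ref{lem1} the sign pattern forces $\del[ji]{\M}>0 \iff \del[ij]{\M}<0$, so $(j,i)\in\G$ iff the contribution is positive — terms with $\del[ji]{\M}=0$ drop out harmlessly from both sides; (b) for the second sum, when $\q{ji}{}=0$ and $\q[c]{ji}{i,j}=0$, rationalizability on the binary set gives $\p[]{i}{\set{i,j}}=0$, hence $\del[ij]{\M}=\p{i}{\M}\p[]{j}{\set{i,j}}-\p[]{i}{\set{i,j}}\p{j}{\M}=\p{i}{\M}\ge 0$, and this term is a "$j$-as-target" term with $(i,j)$ playing the role of an edge with $\q{ij}{}$ possibly zero — but here $\q{ij}{}$ need not be positive, so I must be careful: the $\mathcal{D}(\M)\sav{}$ formulation only records edges in $\G$ with their transition probabilities $\gamma_b$, and when $\q{ji}{}=\q[c]{ji}{i,j}=0$ but $\q{ij}{}=0$ too, the whole term vanishes, whereas Lemma~\ref{lem3} keeps $\del[ij]{\M}\q{ij}{}/\p[]{j}{\set{i,j}}=0$ anyway. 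The point is that $\G$ as defined captures exactly the edges carrying a nonzero coefficient, and every term in Lemma~\ref{lem3}'s identity with a zero coefficient can be freely added or removed.

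\textbf{Main obstacle.} The delicate part is the precise matching of index sets: Lemma~\ref{lem3} sums over sets defined by transition-probability conditions ($\q{ji}{}>0$, or $\q{ji}{}=\q[c]{ji}{i,j}=0$), whereas $\mathcal{D}(\M)$ is defined purely in terms of the sign of $\del[ij]{\M}$, which is choice-data-dependent, not model-dependent. Bridging these requires invoking Lemma~\ref{lem1} (to tie $\text{sgn}(\del[ij]{\M})$ to which direction has positive transition probability) together with the binary-set rationalizability identity $\p[]{i}{\set{i,j}}\q[c]{ij}{i,j}=\p[]{j}{\set{i,j}}\q[c]{ji}{i,j}$ and Assumption~\ref{a-pc}. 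Concretely: if $\del[ij]{\M}>0$ then by Lemma~\ref{lem1} we cannot have $\q{ij}{}>0$ with $\del[ij]{\M}\ne 0$ unless the paired term has opposite sign — so the definition of $\G$ via $\del[ij]{\M}>0$ is consistent with a well-defined assignment $\gamma_b(\M)=\q{ij}{}$, and the terms omitted (those with $\del[ij]{\M}=0$ or $\q{ij}{}=0$) contribute zero to Lemma~\ref{lem3}'s sum. Once this index-set dictionary is established in both directions, the equivalence $\mathcal{D}(\M)\sav{}=\boldsymbol{0}\iff\eqref{eq:diff}$ is immediate, and combining with Lemma~\ref{lem3} gives the result. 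I would also note in passing that for the "only if" direction one takes the $\mat[]{\M}$ guaranteed by rationalizability and reads off $\sav{}$, while for the "if" direction one must check that a solution $\sav{}$ to the system, together with the off-diagonal entries forced by TR-IIA on the remaining (non-$\G$) pairs and the diagonal fixed by Assumption~\ref{a-consideration}, yields a genuine stochastic matrix with the right communicating structure — but this last verification is exactly what Lemma~\ref{lem3} already encapsulates, so no new work is needed beyond the re-indexing.
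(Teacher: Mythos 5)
Your proposal is correct and takes essentially the same route as the paper's proof: both reduce the statement to Lemma~\ref{lem3} by expanding row $a$ of $\mathcal{D}(\M)\sav{}=\boldsymbol{0}$ into sums over edges of $\G$ incident to $m_a=j$, and then use Lemma~\ref{lem1}, TR-IIA, Assumption~\ref{a-pc} and binary-set rationalizability to translate between the sums indexed by the sign of $\del[ij]{\M}$ and those indexed by which transition probabilities are positive. The only (immaterial) points you leave implicit are that $\mathcal{D}(\M)$ is well defined because $\del[ij]{\M}>0$ forces $\p[]{j}{\set{i,j}}>0$, and the trivial case $\G=\emptyset$.
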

\begin{proof} Take an arbitrary menu $\M\in\all$.
Observe that the linear system~\eqref{eq:diff} in Lemma~\ref{lem3} is trivially satisfied whenever \bp[\M] is such that $\del[ij]{\M}=0$ for all pairs $i,j\in\M$. If it holds for at least one pair $i,j\in\M$ that $\del[ij]{\M}>0$, then $\G\neq \emptyset$, which we assume for the remainder of the proof.

Note that all elements of the matrix $\mathcal{D}(\M)$ are well-defined since $\del[ij]{\M}>0$ implies $\p[]{j}{\set{i,j}}>0$. Further, none of the columns of the matrix is equal to $\textbf{0}$ because each column corresponds to a pair $(i,j)$ for which $\del[ij]{\M}>0$. 

We claim that the linear system~\eqref{eq:diff-system} is equivalent to the one in~\eqref{eq:diff} and is thus necessary and sufficient for rationalizability. Consider an equation from the linear system~\eqref{eq:diff-system} corresponding to an arbitrary matrix row $a$ such that $m_{a}=j$:
	\begin{equation*}
	\begin{gathered}
\sum\limits_{\set{b:g_b^M=(j,i), \forall i\in\M}}\mathrm{d}_{a,b}(\M)\gamma_{b}(\M)+ \sum\limits_{\set{b:g_b^M=(i,j), \forall i\in\M}}\mathrm{d}_{a,b}(\M)\gamma_{b}(\M)=0\\
 \sum\limits_{\set{i\in\M:(j,i)\in \G}}\frac{\del[ji]{\M}\q{ji}{}}{\p{i}{\set{i,j}}}-\sum\limits_{\set{i\in\M:(i,j)\in \G}}\frac{\del[ij]{\M}\q{ij}{}}{\p{j}{\set{i,j}}}=0\\
\sum\limits_{\set{i\in\M\setminus\set{j}:\q{ji}{} >0}}\frac{\del[ji]{\M}\q{ji}{}}{\p{i}{\set{i,j}}}-\sum\limits_{\set{i\in\M\setminus\set{j}:\q{ji}{} =0, \q[c]{ji}{i,j} =0}}\frac{\del[ij]{\M}\q{ij}{}}{\p{j}{\set{i,j}}}=0,
	\end{gathered}
	\end{equation*}
	where we obtain the last equation by applying Lemma~\ref{lem1} on all pairs $(i,j)\in \G$ for which $\q{ji}{}\neq 0$ and use the fact that Assumption~\ref{a-TR-IIA} and $\q{ji}{}= 0$ imply that either $\q{ij}{}= 0$ or $\q[c]{ji}{i,j}= 0$. Note that applying Lemma~\ref{lem1} does not lead to a division by zero because if $\q{ji}{}> 0$, then $\q[c]{ji}{i,j}>0$ and by rationalizability, $\p{i}{\set{i,j}}>0$.
\end{proof}

\subsection{Proof of Theorem~\ref{T1}}
\label{proof:T1}


\subsubsection{Equivalence between (i) and (ii)}
The necessity part of the argument follows directly from Proposition~\ref{prop:comparability1}. 
In order to show the sufficiency part, fix a menu $\M$ and let $\bp[\M]$ be  rationalizable by an \msc[] $\langle \mat[]{\M},\init[]{}\rangle$ and all its \comp{}s. We will show that $\mat{\M}$ satisfies the Kolmogorov criterion given in Equality~\eqref{kolmogorov}. We consider the \comp{} $R(\M)$ with positive entries except between the pair $i,j\in\M$ such that $r_{ij}(\M)=r_{ji}(\M)=0$. The rationalizability of~$\bp[\M]$ by $\mat{\M}$ and $R(\M)\odot \mat{\M}$ implies that ${\lra[]{\init[]{}}{\mat{\M}}(I-\mat[]{\M}) = \boldsymbol{0}}$ and ${\lra[]{\init[]{}}{\mat{\M}}(I-R(\M)\odot\mat[]{\M}) = \boldsymbol{0}}$. Consider the row of these linear systems corresponding to alternative $j$
	\begin{equation*}
	\begin{gathered}
	\sum_{k\neq j}\ro[]{}{k}{\M}\q{kj}{}-\ro[]{}{j}{\M}\sum_{k\neq j}\q{jk}{}=0,\\
	\sum_{k\neq j}\ro[]{}{k}{\M}r_{kj}(\M)\q{kj}{}-\ro[]{}{j}{\M}\sum_{k\neq j}r_{jk}(\M)\q{jk}{}=0.\\
	\end{gathered}
	\end{equation*}

	Using that $r_{kl}(\M)=c$ except for the pair $i,j$ we can express latter equation:
	\begin{equation*}
	c\sum_{k\neq j,i}\ro[]{}{k}{\M}\q{kj}{}-\ro[]{}{j}{\M}\q{jk}{}=0.
	\end{equation*}
Thus, detailed balance given in Equation~\eqref{balance} is satisfied for the pair $i,j$ in the non-restricted model $\mat{\M}$. We can prove analogously that detailed balance holds on all pairs such that restricting their comparability does not lead to the formation of new communicating classes. Note that detailed balance holds trivially for all pairs which are not directly comparable in $\mat{\M}$. Thus, all cycles $\cya[\M']$ composed of these pairs satisfy the Kolmogorov criterion. 

Finally, we consider those pairs that belong to the same communicating class according to $\mat{\M}$, but belong to different classes when the transition between the two is prohibited. Let $\M'$ and $\M''$ be two disjoint subsets of $\M$ with $i\in\M'$ and $j\in\M''$ such that $\q{kl}{}=0$ for all $k\in\M'$ and $l\in\M''$ except $\q{ij}{}>0$. It easily seen that the Kolmogorov criterion is satisfied for all cycles with $(i,j)\in\cya[\M'\cup\M'']$. Therefore, the Kolmogorov criterion holds on all cycles of alternatives and the Markov process given by $\mat{\M}$ is reversible.



\subsubsection{Equivalence between (i) and (iii)}
\label{proof-T1-part2}
Consider a stochastic choice function $\bp[\Nset]$ and a menu $\M\in\Nset$. The condition $\cyrset[]{\M} = \cyrset[0]{\M}$ is trivially satisfied for $|\M|=2$, as is reversibility by all generating models. We assume $|\M|\geq 3$ for the rest of the proof. Let $\langle \mat[]{\M},\init[]{}\rangle$ be a rationalizing \msc[], \ie $\lra[]{\init[]{}}{\mat[]{\M}}=\bp[\M]$. Note that at least one rationalizing model always exists: letting \q{ij}{}= 0 for all $i,j\in\M$ and $\init[]{}=\bp[\M]$ generates the stochastic choice function $\bp[\M]$. \\

\textbf{(iii) $\implies$ (i):} We prove the statement by contraposition, \ie we show that if an \msc[] violates reversibility on~$M$, then $\bp[\M]$ is such that $\cyrset[+]{\M}\neq \emptyset$.

The proof is structured in three steps. First, we show that a non-reversible \msc[] on~$\M$ generates stochastic choice functions with $\del[kl]{\M}\neq 0$ for those pairs $k,l\in\M$ that violate the detailed balance conditions given in~\eqref{balance}. In the second step, we apply Lemma~\ref{lem4} from Appendix~\ref{sec:aux} which gives necessary and sufficient conditions for rationalizability. If the Markov chain is not reversible on~$\M$, the system has a positive solution. Finally, we invoke Gordan's theorem, which gives a necessary and sufficient condition for the existence of positive solutions to a homogeneous linear system and show the relationship of the condition to the property $\cyrset[]{\M} = \cyrset[0]{\M}$. 

\textbf{Step 1}: 
If an \msc[] is not reversible on~$\M$, then detailed balance is violated for at least one pair of alternatives: $ \exists k,l\in\M$ such that 
\begin{equation}
\q{kl}{}\p{k}{\M}>\q{lk}{}\p{l}{\M}.
\label{eq:violation-rev}
\end{equation}
Thus, $\q{kl}{}>0$ and $\p{k}{\M}>0$.
We consider the following two cases: (1) $\q[c]{lk}{k,l}=0$ and (2) $\q[c]{lk}{k,l}>0$. In the former, Assumption~\ref{a-pc} implies that $\q[c]{kl}{k,l}>0$ and thus, $\p{k}{\set{k,l}}=0$. Since $\p{k}{\M}>0$, we have that 
$\p{l}{\set{k,l}}\p{k}{\M}>\p{k}{\set{k,l}}\p{l}{\M}=0,$
and $\del[kl]{\M}>0$.
In the second case with $\q[c]{lk}{k,l}>0$, Assumption~\ref{a-TR-IIA} together with $\q{kl}{}>0$ imply that $\q[c]{kl}{k,l}>0$ and $\q{lk}{}>0$. The generated stochastic choice function is such that $\p{l}{\set{k,l}}\in (0,1)$. It follows from Inequality~\eqref{eq:violation-rev} and Assumption~\ref{a-TR-IIA} that
\begin{equation*}
\frac{\p{l}{\set{k,l}}}{\p{k}{\set{k,l}}}=\frac{\q[c]{kl}{k,l}}{\q[c]{lk}{k,l}}=\frac{\q{kl}{}}{\q{lk}{}}>\frac{\p{l}{\M}}{\p{k}{\M}}.
\end{equation*}
Thus, we have shown that whenever the detailed balance condition is violated for a pair of alternatives $k,l$ on $\M$, $\del[kl]{\M}\neq 0$.

\textbf{Step 2}: Recall from Appendix~\ref{proof:auxiliary} that $\G=\set{(i,j)\in\M^2:\del[ij]{\M}>0} = \set{g_b^{\M}}_{b\in\set{1,\dots,|\G|}}$. Our previous result ensures that $\G\neq\emptyset$. We apply Lemma~\ref{lem4}, which states that $\bp[\M]$ is such that
	\begin{equation}
	\mathcal{D}(\M)\sav{}=\boldsymbol{0}.
	\label{eq:diff-systemT1-1}
	\end{equation}
%

\textbf{Step 3}: As already shown, whenever the detailed balance condition is violated for a pair of alternatives $k,l$ on $\M$ (and thus $\q{kl}{}>0$ and/or $\q{lk}{}>0$), $\del[kl]{\M}\neq 0$. This means that $(k,l)\in\G$ and $\sav{}\geq \boldsymbol{0}$. It follows directly from Gordan's theorem\footnote{See for example Theorem 15.1(2) in \citet{Woerdeman2015}.} that there exists a strictly positive solution ${\sav{}\geq \boldsymbol{0}}$ to the linear system~\eqref{eq:diff-systemT1-1} if and only if there does not exist a vector $\boldsymbol{z}=(z_1,\dots, z_{|\M|})\in\mathbb{R}^{|M|}$ such that
\begin{equation}
\label{eq:sys-strongly1}
\boldsymbol{z}\mathcal{D}(\M)\gg\boldsymbol{0}.
\end{equation}
Let the menu be denoted as an indexed set $\M=\set{m_a:a\in\set{1,\dots,|\M|}}$ and the elements of $\boldsymbol{z}$ be such that $z_a=v_k$ whenever $m_a=k$. The linear system~\eqref{eq:sys-strongly1} is equivalent to
\begin{equation*}
z_{a}\mathrm{d}_{a,b}(\M) + z_{c}\mathrm{d}_{c,b}(\M)>0, \quad \forall (k,l) \in \G\text{ with }m_a=k, m_c=l, g_{b}^M=(k,l),
\end{equation*}
which is in turn equal to
\begin{equation}
\label{eq:v-diff}
(v_k - v_l)\frac{\del[kl]{\M}}{\p{l}{\set{k,l}}}>0, \quad \forall (k,l) \in \G.
\end{equation}
Hence, in order for Inequality~\eqref{eq:sys-strongly1} to hold, the vector $\boldsymbol{z}$ should be such that for all pairs $(k,l) \in\G$, we have $v_k>v_l$. Since the Markov chain is non-reversible on~$\M$ and $\sav{}\geq \boldsymbol{0}$, there does not exist a vector $\boldsymbol{z}$ satisfying \eqref{eq:v-diff}. This is the case when there exists some cycle $\cya[\M']\in\cyrset[+]{\M}$ and $\cya[\M']\subseteq\G$, because then we cannot assign a value to each element of $\M'$ such that $v_k>v_l$ for all $(k,l)\in \cya[\M']$.

\textbf{(i) $\implies$ (iii)}: We again prove the statement by contraposition, \ie we show that if \bp[\M] is such that $\cyrset[]{\M}\neq\cyrset[0]{\M}$, then there exists a non-reversible rationalizing \msc[]. The proof is structured analogously to the sufficiency part.

\textbf{Step 1}: The assumption that $\cyrset[]{\M}\neq\cyrset[0]{\M}$ implies that there exists a positive sign-consistent cycle $\cya[\M']\in\cyrset[+]{\M}$ and $\cya[\M']\subseteq\G$. Since $\G\neq\emptyset$, it follows from Lemma~\ref{lem4} that the transition probabilities of all rationalizing models satisfy $\mathcal{D}(\M)\sav{}=\boldsymbol{0}$.

\textbf{Step 2}: Gordan's theorem implies that there exists a strictly positive solution ${\sav{}\geq \boldsymbol{0}}$ if and only if there is no vector $\boldsymbol{z}\in\mathbb{R}^{|M|}$ for which
\begin{equation}
\label{eq:sys-strongly1T1-2}
\boldsymbol{z}\mathcal{D}(\M)\gg\boldsymbol{0}.
\end{equation}
Letting $z_a=v_k$ whenever $m_a=k$, the system becomes equivalent to
\begin{equation}
\label{eq:v-diffT1-2}
(v_k - v_l)\frac{\del[kl]{\M}}{\p{l}{\set{k,l}}}>0, \quad \forall (k,l) \in \G.
\end{equation}	
Since it holds for all $(k,l)\in\cya[\M']$ that $\del[kl]{\M}>0$, we cannot assign a numerical value to each element of $\M'$ such that $v_k>v_l$ for all $(k,l)\in \cya[\M']$. Therefore, there does not exist a vector $\boldsymbol{z}\in\mathbb{R}^{|M|}$ that satisfies the system of inequalities~\eqref{eq:sys-strongly1T1-2} and there exists a rationalizing model is such that ${\sav{}\geq \boldsymbol{0}}$.

\textbf{Step 3}: We can now use that there is a solution ${\sav{}\geq \boldsymbol{0}}$ to construct a transition probability matrix $\mat[]{\M}$. In particular, we define
\begin{align}
\qp{ij}{}=\begin{cases} 
\gamma_{b}(\M)& \text{ if }g_{b}^M=(i,j)\in\G,\\
\gamma_{b}(\M)\frac{\p{j}{\set{i,j}}}{\p{i}{\set{i,j}}}& \text{ if }g_{b}^M=(j,i)\in\G\\
\p{j}{\set{i,j}}& \text{ if }\del[ij]{\M}=0.
\end{cases}
\label{qij}
\end{align}
Note that $(j,i) \in\G$ implies that $\p{i}{\set{i,j}}>0$ since $\del[ji]{\M}>0$.
Next, we let $\q{ij}{}=\kappa\qp{ij}{}$ for all $i\neq j$, where $\kappa>0$ is such that $\kappa\sum_{j\neq i}\qp{ij}{}< 1$ for all $i\in\M$. Finally, we let $\q{ii}{}=1-\sum_{j\neq i}\q{ij}{}$. The constructed transition probability matrix satisfies Assumptions~\ref{a-consideration} --~\ref{a-TR-IIA} and rationalizes \bp[\M].

Finally, we will show that there exists a pair  of alternatives that violates detailed balance. From the above definition of the model and the fact that ${\sav{}\geq \boldsymbol{0}}$, there must exist a pair $(k,l)\in\G$, \ie $\del[kl]{\M}=\p{k}{\M}\p{l}{\set{k,l}}-\p{l}{\M}\p{k}{\set{k,l}}>0$, for which $\q{kl}{\M}>0$. Therefore, $\p{l}{\set{k,l}}>0$ and $\p{k}{\M}>0$.
We consider two cases: (1) $\p{k}{\set{k,l}}=0$ and (2) $\p{k}{\set{k,l}}>0$. In the first case, it follows from rationalizability and Assumptions~\ref{a-consideration} --~\ref{a-TR-IIA} that $\q{lk}{\M}=0$. Since $\q{kl}{\M}\p{k}{\M}>0$ and $\q{lk}{\M}=0$, detailed balance is violated on the pair $k,l$ and the \msc[] is non-reversible. In the second case with $\p{k}{\set{k,l}}>0$, we have $\q{lk}{}=\q{kl}{}\frac{\p{k}{\set{k,l}}}{\p{l}{\set{k,l}}}$. Thus, $\del[kl]{\M}>0$ implies $\frac{\p{l}{\set{k,l}}}{\p{k}{\set{k,l}}}=\frac{\q{kl}{}}{\q{lk}{}}>\frac{\p{l}{\M}}{\p{k}{\M}}$, which is a violation of detailed balance.

\subsection{Proof of Proposition~\ref{prop:comparability1}}
\label{proof:comparability1}
Consider a reversible and irreducible \msc[] $\langle \mat[]{\M},\init[]{}\rangle$ and its generated stochastic choice function $\brho[l]{}$ for a menu $\M\in\Mset$. Let $R(\M)$ be an arbitrary \comp[] of $\langle \mat[]{\M},\init[]{}\rangle$. We will show that $\langle R(\M)\odot\mat{\M},\init[]{}\rangle$ is reversible and irreducible and $\brho[l]{}$ is its stationary distribution.

Since a \comp preserves the communicating classes, $R(\M)\odot\mat{\M}$ is irreducible and has a unique stationary distribution. The reversibility of $\mat[]{\M}$ implies that the Kolmogorov's criterion given in Equation~\eqref{kolmogorov} is satisfied. The symmetry of $R(\M)$ implies that the Kolmogorov condition is also satisfied by $R(\M)\odot\mat{\M}$ and is reversible. 

To prove rationalizability, we need to show ${\lra[]{\init[]{}}{\mat{\M}}(I-R(\M)\odot\mat[]{\M}) = \boldsymbol{0}}$. We can express the equation for all $j\in\M$ as 
	\begin{align*}
	\begin{gathered}
	\sum_{i\neq j}\ro[]{}{i}{\M}r_{ij}(\M)\q{ij}{}-\ro[]{}{j}{\M}\sum_{i\neq j}r_{ji}(\M)\q{ji}{}=0,\\
	\sum_{i\neq j}r_{ij}(\M)\left(\ro[]{}{i}{\M}\q{ij}{}-\ro[]{}{j}{\M}\q{ji}{}\right)=0,
	\end{gathered}
	\end{align*}
where the last step results from the symmetry of $R(\M)$. Since $\mat{\M}$ must satisfy the detailed balance condition given in Equation~\eqref{balance} on all pairs, one can easily see that the equality holds. 
	
Finally, consider the case when the initial \msc[] is not irreducible. Since \comp{}s do not affect the initial distribution and preserve the communicating classes, our proof extends trivially to the case with reducible \msc[]s.
\subsection{Proof of Theorem~\ref{T2}}
\label{proof:T2}
Consider a stochastic choice function $\bp[\Nset]$ and a menu $\M\in\Nset$. Since both parts of the theorem are trivially satisfied for $|\M|=2$, we assume $|\M|\geq 3$ for the rest of the proof.

\textbf{(i) $\implies$ (ii):} We consider a pairwise comparable \msc[] on $\M$ with generated stochastic choice function $\lra[]{\init[]{}}{\mat[]{\M}}=\bp[\M]$. If it holds for all pairs $i,j\in\M$ that $\del[ij]{\M}=0$, the condition in (ii) is trivially satisfied. We assume hereafter that there is at least one pair $i,j\in\M$ with $\del[ij]{\M}>0$.

We apply Lemma~\ref{lem4} from Appendix~\ref{sec:aux}. Since $\del[ij]{\M}>0$, $(i,j)\in\G$ by definition. The assumption that the model is pairwise comparable, \ie there is no $k,l\in\M$ for which $\q{kl}{}=\q{lk}{}=0$, ensures that $\sav{}\gg\boldsymbol{0}$. We use Stiemke's lemma,\footnote{See for example Theorem 15.1(1) in \citet{Woerdeman2015}.} which states that there exists a strongly positive solution $\sav{}$ to the linear system in~\eqref{eq:diff-system} if and only if there does not exist a vector $\boldsymbol{z}\in\mathbb{R}^{|M|}$ for which
\begin{equation}
\label{eq:sys-strongly-ii}
\boldsymbol{z}\mathcal{D}(\M)\geq\boldsymbol{0}.
\end{equation}
Analogously to the simplification presented in the proof of Theorem~\ref{T1} given in Inequality~\eqref{eq:v-diff}, this system can be written as
\begin{equation*}
\label{eq:v-diff1}
(v_k - v_l)\frac{\del[kl]{\M}}{\p{l}{\set{k,l}}}\geq 0, \forall (k,l) \in \G \text{ and } (v_k - v_l)\frac{\del[kl]{\M}}{\p{l}{\set{k,l}}}> 0 \text{ for at least one }(k,l) \in \G.
\end{equation*}
Since $\sav{}\gg\boldsymbol{0}$, such vector $\boldsymbol{z}$ does not exist. Note that it must hold for $k,l\in\M$ that belong to a cycle in $\cyrset[+]{\M}$ that $v_k=v_l$. Since a vector~$\boldsymbol{z}$ does not exist, we should not be able to find a pair $(k,l)\in \G$ such that $v_k>v_l$. This is the case when $\bp[\M]$ is such that all pairs $(k,l)\in \G$ belong to some sign-consistent cycle in \cyrset[+]{\M}, which coincides with condition (ii).

Finally, if the \msc[] is fully comparable, the positivity assumption of the generated stochastic choice function follows directly from the Perron–Frobenius theorem.

\textbf{(ii) $\implies$ (i):}
Consider a stochastic choice function such that for all pairs $i,j\in\M$ with $\del[ij]{\M}\neq 0$, $\exists \cya[\M']$ with $(i,j)\in\cya[\M']$ and $\cya[\M']\in\cyrset[]{\M}$ for $\M'\in\Mset$. We will show that there exists a rationalizing \msc[] with $\mat[]{\M}$ that is pairwise comparable. Note that all stochastic choice functions are rationalizable by an \msc[] (see the Appendix~\ref{proof-T1-part2}). 

Assume first that $\del[ij]{\M}=0$ for all pairs $i,j\in\M$. We apply Lemma~\ref{lem3} from Appendix~\ref{sec:aux} and see that Equation~\eqref{eq:diff} is trivially satisfied for all $j\in\M$. A pairwise comparable \msc[] that rationalizes the stochastic choice function is, for example, one where $\q{ij}{}=\kappa\p{j}{\set{i,j}}$, where $\kappa>0$ is such that $\mat[]{\M}$ satisfies Assumptions~\ref{a-consideration} --~\ref{a-TR-IIA}.

We assume hereafter that there is at least one pair $i,j\in\M$ with $\del[ij]{\M}>0$ and thus $(i,j)\in\G$. It follows from Lemma~\ref{lem4} that the transition probabilities satisfy the system 
\begin{equation}
	\label{eq:diff-systemT2-2}
	\mathcal{D}(\M)\sav{}=\boldsymbol{0}.
	\end{equation} 
Following Stiemke's lemma, there exists a strongly positive solution ${\sav{}\gg \boldsymbol{0}}$ to the system~\eqref{eq:diff-systemT2-2} if and only if there does not exist a vector $\boldsymbol{z}\in\mathbb{R}^{|M|}$ for which
\begin{equation}
\label{eq:sys-strongly1T2-2}
\boldsymbol{z}\mathcal{D}(\M)\geq\boldsymbol{0}.
\end{equation}
Analogously to the proof of Theorem~\ref{T1}, the above linear system is equivalent to
\begin{equation*}
(v_k - v_l)\frac{\del[kl]{\M}}{\p{l}{\set{k,l}}}\geq 0, \forall (k,l) \in \G \text{ and } (v_k - v_l)\frac{\del[kl]{\M}}{\p{l}{\set{k,l}}}> 0 \text{ for at least one }(k,l) \in \G.
\end{equation*}	
Since it holds for all alternatives $(k,l)\in\G$ that $\exists \cya[\M']$ with $(k,l)\in\cya[\M']$ and $\cya[\M']\in\cyrset[]{\M}$ for some $\M'\in\Mset$, we have that $v_k=v_l$ and there is no pair $(k,l)\in\G$ for which $(v_k - v_l)\frac{\del[kl]{\M}}{\p{l}{\set{k,l}}}> 0$. Thus, there exists a rationalizing model with ${\sav{}\gg \boldsymbol{0}}$. 

We can use the solution to the system~\eqref{eq:diff-systemT2-2} to construct a transition probability matrix $\mat[]{\M}$ as follows:
\begin{align}
\qp{ij}{}=\begin{cases} 
\gamma_{b}(\M)& \text{ if }g_{b}^M=(i,j)\in\G,\\
\gamma_{b}(\M)\frac{\p{j}{\set{i,j}}}{\p{i}{\set{i,j}}}& \text{ if }g_{b}^M=(j,i)\in\G\\
\p{j}{\set{i,j}}& \text{ if }\del[ij]{\M}=0.
\end{cases}
\label{qij-T2}
\end{align}
Note that $(j,i) \in\G$ implies that $\p{i}{\set{i,j}}>0$ since $\del[ji]{\M}>0$.
Next, we let $\q{ij}{}=\kappa\qp{ij}{}$ for all $i\neq j$, where $\kappa>0$ is such that $\kappa\sum_{j\neq i}\qp{ij}{}< 1$ for all $i\in\M$. Finally, we let $\q{ii}{}=1-\sum_{j\neq i}\q{ij}{}$. This model satisfies Assumptions~\ref{a-consideration} --~\ref{a-TR-IIA}, rationalizes \bp[\M], and is pairwise comparable.

Finally, we show that assuming positivity in addition, ensures the rationalizability with a fully comparable model. Since ${\sav{}\gg \boldsymbol{0}}$, it holds for all $(i,j)\in\G$ that $\q{ij}{\M}>0$. It is easy to see that the positivity of the stochastic choice function ensures that all other transition probabilities are positive as well, hence the constructed model is fully comparable.

\subsection{Proof of Proposition~\ref{prop:no-edges}}
\label{proof:no-edges}
Since $\bp[\M]$ is not \btc over the pair $i,j$, it holds $\del[ji]{\M}\neq 0$ and $\nexists \cya[\M']$ with $(i,j)\in\cya[\M']$ and $\cya[\M']\in\cyrset[+/-]{\M}$ for $\M'\in\Mset$. We assume by contradiction that there is a rationalizing model $\langle \mat[]{\M},\init[]{}\rangle$ with $\q{ij}{}>0$. Since $\bp[\M]$ violates Theorem~\ref{T2}(ii), there has to be at least one pair $k,l\in\M$ for which $\q{kl}{}=0$. Let us denote the set of such pairs with $S(\mat{\M})$. 

We construct a stochastic choice function $\bp[\Nset]^*$ such that $\bp[\M]^*=\bp[\M]$ and for all $k,l\in S(\mat{\M})$ we adjust the choice probability from the binary sets so that $\delta_{kl} (\bp[\M]^*)= 0$. Note that $\bp[\M]^*\neq \bp[\M]$ since following the proof of Theorem~\ref{T2}, there is at least one pair with $\del[kl]{\M}\neq 0$ and $\q{kl}{}=0$. Note further that $\bp[\M]^*$ is not \btc over the pair $i,j$, since according to the described procedure, $\mathcal{C}_{+/-}^{S}(\bp[\M]^*)\subseteq\cyrset[+/-]{\M}$ and $\delta_{ij} (\bp[\M]^*)\neq 0$.

We now construct an \msc[] with $\mat{\M}^*$ such that $\q{kl}{}^*=\q{kl}{}$ for all $k,l\not\in S(\mat{\M})$ and let $\q{kl}{}^*\in(0,1)$ and $\q{lk}{}^*=\q{kl}{}^*\frac{\p{k}{\set{k,l}}^*}{\p{l}{\set{k,l}}^*}$ for all $k,l\in S(\mat{\M})$. Observe that this model is pairwise comparable.

Applying Lemma~\ref{lem4} shows that the constructed \msc[] $\mat{\M}^*$ rationalizes $\bp[\M]^*$. This is the case because $\delta_{kl} (\bp[\M]^*)\q{kl}{}^* = \del[kl]{\M}\q{kl}{}$ for all $k,l \in \M$. Theorem~\ref{T2} implies that $\bp[\M]^*$ should be \btc over all pairs of alternatives, which is a contradiction. Therefore, all rationalizing models of $\bp[\Nset]$ should be such that $\q{ij}{}=0$. 
\subsection{Proof of Theorem~\ref{prop:luce}}
\label{proof:luce}
Consider a stochastic choice function $\bp[\Nset]$ and a rationalizing \msc[] on $\M\in\Nset$, \ie $\lra[]{\init[]{}}{\mat[]{\M}}=\bp[\M]$. 
\subsubsection{Equivalence between (i) and (ii)}
\textbf{(i) $\implies$ (ii):}  Let $\mat{\M}$ be a reversible and fully comparable \msc[]. We fix an arbitrary pair $i,j\in\M$, for which will show that the stochastic choice function satisfies positivity and IIA. 

Since the model is fully comparable, the stationary distributions are such that $\ro[]{}{i}{\M}>0$ for all $i\in\M$ and all $\M\in\Mset$, as there are no transient states. Hence, the generated stochastic choice function is positive. Reversibility means that the detailed balance condition is satisfied for all pairs and menus, hence 
\[\frac{\ro[]{}{i}{\M}}{\ro[]{}{j}{\M}}=\frac{\q{ji}{}}{\q{ij}{}} \text{  and  }\frac{\ro[c]{}{i}{i,j}}{\ro[c]{}{j}{i,j}}=\frac{\q[c]{ji}{i,j}}{\q[c]{ij}{i,j}}.\]
It follows from Assumption~\ref{a-TR-IIA} that all fractions in the above equations are equal and hence IIA is satisfied on the pair $i,j$ and thus, on all pairs and menus.

\textbf{(ii) $\implies$ (i):}  Let $\bp[\Nset]$ be a positive stochastic choice function that satisfies IIA, thus $\del[kl]{\M}=0$ for all $k,l\in\M$. Lemma~\ref{lem3} in Appendix~\ref{proof:auxiliary} implies that there exists a fully comparable rationalizing \msc[]. We will show that the detailed balance condition given in Equation~\eqref{balance} holds on an arbitrary pair $i,j\in\M$.

Using the positivity and IIA properties of the choice function, we obtain
\begin{equation*}
\frac{\p{j}{\M}}{\p{i}{\M}}=\frac{\p{j}{\set{i,j}}}{\p{i}{\set{i,j}}}=\frac{\q[c]{ji}{i,j}}{\q[c]{ij}{i,j}}.
\end{equation*}
Finally, Assumption~\ref{a-TR-IIA} and the above equation imply detailed balance. Thus, the condition is satisfied on all pairs and menus and the rationalizing model is reversible.

\subsubsection{Equivalence between (i) and (iii)}

\textbf{(iii) $\implies$ (i):}  If there exists a utility function $u:\all\rightarrow \mathbb{R}_{++}$ such that the ratios of transition probabilities satisfy $\frac{\q{ij}{}}{\q{ji}{}}=\frac{u(j)}{u(i)}$ for all $i,j\in\M$, reversibility of the \msc[] is trivial as it follows directly from Kolmogorov's criterion given in Equation~\eqref{kolmogorov}.

\textbf{(i) $\implies$ (iii):} Let \mat{\M} be a reversible and fully comparable \msc[]. Hence, the detailed balance condition~\eqref{balance} is satisfied for all pairs of alternatives. As we showed previously, the generated stochastic choice function is positive and IIA. It follows from \citet{Luce1959} that there exists an increasing function ${u:\all\rightarrow \mathbb{R}_{++}}$ such that for all $\M \in \Mset$ and $i\in \M$ such that
\begin{equation}
\p{i}{\M}=\frac{\val{i}}{\sum_{j\in \M} \val{j}}
\label{utility}
\end{equation}
Plugging in Equation~\eqref{utility} into the detailed balance condition, we obtain that for all $i,j \in\M$ and all $\M\in\Mset$
\begin{equation*}
\begin{gathered}
\q{ji}{}\frac{\val{j}}{\sum_{k\in \M} \val{k}}=\frac{\val{i}}{\sum_{k\in \M} \val{k}}\q{ij}{}
\end{gathered}
\end{equation*}
and the result follows.
\subsection{Proof of Theorem~\ref{T3}}
\label{proof:T3}
The \msc[] is finite and aperiodic by definition. If it is irreducible in addition, the Markov chain is ergodic. Thus, the equivalence of (i) and (ii) follows trivially from Proposition~\ref{prop:limit}, since the stationary distribution of an ergodic Markov process does not depend on the initial distribution.
\subsubsection{Proof of equivalence between statements (i) and (iii)}
Consider a stochastic choice function $\bp[\Nset]$ and a rationalizing \msc[] $\langle \mat[]{\M},\init[]{}\rangle$ on $\M\in\Nset$. Note that a rationalizing model always exists (see Appendix~\ref{proof-T1-part2}).

\textbf{(i) $\implies$ (iii):} Let $\langle \mat[]{\M},\init[]{}\rangle$ be an irreducible model, \ie every state can be reached from every other state. In other words, for a given pair $(i,j)\in\M$, there exists some $\cya[\M']$ for which $\q{kl}{}>0$ for all $(k,l)\in\cya[\M']\setminus (j,i)$. Note that if we take the union of all such cycles corresponding to each of the pairs, we obtain a cycle $\cya[\M]$ for which $\q{kl}{}>0$ for all $(k,l)\in\cya[\M]$. We need to show that the generated \bp[\M] is \btc over all $(k,l)\in\cya[\M]$, \ie if $\del[kl]{\M}\neq 0$, then $\exists \cya[\M'']$ with $(k,l)\in\cya[\M'']$ and $\cya[\M'']\in\cyrset[]{\M}$ for some $\M''\in\Mset$.

Suppose by contradiction that there is a pair $(k,l)\in\cya[\M]$, such that $\del[kl]{\M}\neq 0$, but there is no sign-consistent cycle $\cya[\M'']$ with $(k,l)\in\cya[\M'']$ and $\cya[\M'']\in\cyrset[]{\M}$. Proposition~\ref{prop:no-edges} implies that all rationalizable models of $\bp[\M]$ are such that $\q{ij}{}=\q{ji}{}=0$, which is a contradiction. Thus, $\bp[\M]$ needs to be \btc over the pairs of the cycle $\cya[\M]$.
%

\textbf{(iii) $\implies$ (i):} Let \bp[\M] be \btc over all pairs of alternative of some cycle $\cya[\M]$. We will show that there is a model $\langle \mat[]{\M},\init[]{}\rangle$ such that $\q{kl}{}>0$ for all $(k,l)\in\cya[\M]$, as it implies that every state can be reached from every other state and the Markov chain is irreducible. 

We construct a stochastic choice function $\bp[\Nset]^*$ such that $\bp[\Nset]^*=\bp[\Nset]$ except for the pairs $k,l \in \M$ for which \bp[\M] is not \btc, we adjust the choice probability from the binary sets so that $\delta_{kl} (\bp[\M]^*)= 0$. Thus, $\bp[\M]^*$ is \btc over all pairs $k,l \in \M$. Following Theorem~\ref{T2}, $\bp[\M]^*$ is rationalizable with a pairwise comparable \msc[] $\langle \mat[*]{\M},\init[]{}\rangle$.

We let $\mat{\M}$ be such that $\q{ij}{}=\q{ij}{}^*$ for all pairs $i,j\in\M$ over which \bp[\M] is \btc{} and $\q{ij}{}=\q{ji}{}=0$ for the remaining pairs in \M. It follows from the rationalizability of $\bp[\Nset]^*$ and Lemma~\ref{lem4} in Appendix~\ref{sec:aux}, that $\bp[\M]$ is rationalizable by $\langle \mat[]{\M},\init[]{}\rangle$ since $\delta_{kl} (\bp[\M]^*)\q{kl}{}^* = \del[kl]{\M}\q{kl}{}$ for all $k,l \in \M$. By definition, $\q{ij}{}=\q{ij}{}^*>0$ for all pairs $i,j\in\cya[\M]$ since \bp[\M] is \btc over those pairs and the model is irreducible.




\section{Examples}
\subsection{Details on Example~\ref{ex:decoy}}
\label{proof:decoy}
We first show that $\frac{\lra[i]{}{\mat[]{\set{i,j,k}}}}{\lra[j]{}{\mat[]{\set{i,j,k}}}}>\frac{\lra[i]{}{\mat[]{\set{i,j}}}}{\lra[j]{}{\mat[]{\set{i,j}}}}$ whenever $\q[c]{ik}{i,k}=\q[c]{ik}{i,j,k}=0$.
The rationalizability of $\bp[\Nset]$ implies $\bp[\M](I-\mat[]{\M}) = 0$ for all $\M\in\Nset$. Therefore, it holds for all $j \in \M$ and $\M\in\Nset$ that
		\begin{equation*}
		\sum\limits_{\substack{i\neq j}}\lra[j]{}{\mat[]{\M}}\q{ji}{}-\sum\limits_{\substack{i\neq j}}\lra[i]{}{\mat[]{\M}}\q{ij}{}=0.
		\end{equation*}
		In particular, it holds $
\lra[j]{}{\mat[]{\all}}\q[2]{ji}{\all}+\lra[k]{}{\mat[]{\all}}\q[2]{ki}{\all}=\lra[i]{}{\mat[]{\all}}\q[2]{ij}{\all}.
$
After applying Assumption~\ref{a-TR-IIA} and rearranging, we obtain
			\begin{equation}
\frac{\lra[j]{}{\mat[]{\all}}}{\lra[j]{}{\set{i,j}}}+\lra[k]{}{\mat[]{\all}}\frac{\q[2]{ki}{\all}}{\q[2]{ji}{\all}}=\frac{\lra[i]{}{\mat[]{\all}}}{\lra[i]{}{\set{i,j}}}.
\label{eq-pi}
\end{equation} 
Since the model is irreducible, the induced choice function is positive (Theorem~\ref{T3}). Thus, all terms in the equation are positive and the first result follows. 

We now prove the second part of the statement that $\q[c]{ki}{i,j,k}>\q[c]{ji}{i,j,k}$ if and only if $\lra[i]{}{\mat[]{\set{i,j,k}}}>\lra[i]{}{\mat[]{\set{i,j}}}$. Starting from Equation~\eqref{eq-pi}, we use that all choice probabilities from the set $\all$ should sum up to 1:
			\begin{equation*}
\frac{\lra[j]{}{\mat[]{\all}}}{\lra[j]{}{\set{i,j}}}+(1-\lra[j]{}{\mat[]{\all}})\frac{\q[2]{ki}{\all}}{\q[2]{ji}{\all}}=\frac{\lra[i]{}{\mat[]{\all}}}{\lra[i]{}{\set{i,j}}}+\frac{\q[2]{ki}{\all}}{\q[2]{ji}{\all}}\lra[i]{}{\mat[]{\all}}.
\end{equation*}
Rearranging the equality and using that $\lra[i]{}{\mat{\all}}>\lra[i]{}{\mat{\set{i,j}}}$, we obtain
\begin{align*}
\begin{gathered}
\frac{\lra[j]{}{\mat[]{\all}}}{\lra[j]{}{\set{i,j}}}\q[2]{ji}{\all}+(1-\lra[j]{}{\mat[]{\all}})\q[2]{ki}{\all}>\q[2]{ji}{\all}+\q[2]{ki}{\all}\lra[i]{}{\set{i,j}},\\
(\q[2]{ji}{\all}-\q[2]{ki}{\all})(\lra[j]{}{\mat[]{\all}}-\lra[j]{}{\set{i,j}})>0.
\end{gathered}
\end{align*}
Note that $\lra[i]{}{\mat{\all}}>\lra[i]{}{\mat{\set{i,j}}}$ together with the fact that all choice probabilities from a given menu sum up to 1 implies that $\lra[j]{}{\set{i,j}}>\lra[j]{}{\mat[]{\all}}$ and the result follows.

\subsection{Details on Example~\ref{ex:T1}}
\label{app-ex:T1}
Note that Assumption~\ref{a-TR-IIA} implies $\frac{\q{ji}{}}{\q{ij}{}}=\frac{\q[c]{ji}{i,j}}{\q[c]{ij}{i,j}}=\frac{\p[]{i}{\set{i,j}}}{\p[]{j}{\set{i,j}}}$. Since the stochastic choice function on binary sets is the same, all rationalizing \msc[]s in Examples~\ref{ex:T1}-\ref{ex:T3} are such that:
\begin{equation}
\label{Qex}
I-\mat[]{\M}= 
    \begin{pmatrix}
q_{ij}+q_{ik}+q_{il}	&-q_{ij}	&-q_{ik}	&-q_{il}\\
-q_{ij}	&q_{ij}+q_{jk}+q_{jl} 	&-q_{jk}	&-q_{jl}  \\
-q_{ik} 	&-\frac{3}{2}q_{jk}		&q_{ik}+\frac{3}{2}q_{jk}+q_{kl} 	&-q_{kl}  \\
-q_{il}	&-q_{jl}	&-\frac{2}{3}q_{kl}&q_{il}+q_{jl}+\frac{2}{3}q_{kl}
\end{pmatrix}.
\end{equation}

We show that no rationalizing \msc[] will have positive transition probabilities to and from state $k$. Using the definition of rationalizability, it has to holds that
\begin{align*}
\begin{gathered}
\p{i}{\M}q_{ik}+\p{j}{\M}q_{jk}-\p{k}{\M}(q_{ik}+\frac{3}{2}q_{jk}+q_{kl})+\p{l}{\M}\frac{2}{3}q_{kl}=0,\\
q_{ik}+q_{jk}-2(q_{ik}+\frac{3}{2}q_{jk}+q_{kl})+\frac{2}{3}q_{kl}=0,
\end{gathered}
\end{align*}
where the second equation results from plugging in the stochastic choice function and simplifying.
Since transition probabilities cannot be negative, the only solution is $\q{ik}{}=\q{jk}{}=\q{kl}{}=0$.

One can easily verify that $\bp[\M] = (0.2,0.2,0.4,0.2)$ is the stationary distribution of an \msc[] with the following transition probability matrix and $\init[k]{}=0.4$ 
\begin{equation*}
Q_2(\M)= 
    \begin{pmatrix}
0.8	&0.1	&0	&0.1\\
0.1	&0.8	&0	&0.1  \\
0 	&0		&1	&0  \\
0.1	&0.1	&0	&0.8
\end{pmatrix}.
\end{equation*}
Note that restricting the comparability between each one of the pairs results in another rationalizing \msc[] when the initial distribution is the same. For example, if we let $R(\M)$ be such that $r_{il}(\M)=0$ and all other off-diagonal elements equal to 1, $R(\M)\odot Q_2(\M)$ rationalizes $\bp[\M]$. This holds even if the \comp{} is weak, for example $r_{il}(\M)=0$, $r_{ij}(\M)=2$ and all other off-diagonal elements equal to 1.
%
\subsection{Details on Example~\ref{ex:T2}}
\label{app-ex:T2}
As pointed out in Appendix~\ref{app-ex:T1}, the transition probability matrix should satisfy Equation~\eqref{Qex} and ${\brho[l]{}(I-\mat[]{\M}) = \boldsymbol{0}}$. One can verify that the below transition probability matrix satisfies these requirements for $\bp[\M] = (0.25,0.28,0.2,0.27)$:
\begin{equation*}
Q_3(\M)= 
    \begin{pmatrix}
0.7	&0.1	&0.1	&0.1\\
0.1	&0.72	&0.16	&0.02  \\
0.1	&0.24	&0.57	&0.09  \\
0.1	&0.02	&0.06	&0.82
\end{pmatrix}.
\end{equation*}

\subsection{Details on Example~\ref{ex:T3}}
First, we show that there is no rationalizing model which assigns a positive transition probability between the pairs $(j,l)$ and $(k,l)$ in either direction.
One of the equations that a rationalizing \msc[] needs to satisfy is
\begin{align*}
\begin{gathered}
\p{i}{\M}q_{il}+\p{j}{\M}q_{jl}+\p{k}{\M}q_{kl}-\p{l}{\M}(q_{il}+q_{jl}+\frac{2}{3}q_{kl})=0,\\
0.07q_{jl}+0.06q_{kl}=0,
\end{gathered}
\end{align*}
where the second equation results from plugging in the stochastic choice function and simplifying.
Hence, the only solution to the above equation is when $\q{jl}{}=\q{kl}{}=0$.

As in the previous examples, the transition probability matrix should satisfy Equation~\eqref{Qex} and ${\brho[l]{}(I-\mat[]{\M}) = \boldsymbol{0}}$. The following matrix satisfies these requirements for $\bp[\M] = (0.24,0.3,0.22,0.24)$:
\label{app-ex:T3}
\begin{equation*}
Q_{4}(\M)= 
    \begin{pmatrix}
0.4	&0.1	&0.3	&0.2\\
0.1	&0.7	&0.2	&0  \\
0.3 &0.3	&0.4	&0  \\
0.2	&0		&0		&0.8
\end{pmatrix}.
\end{equation*}

\bibliographystyle{apacite}
\bibliography{msc-bib}
\end{document}